\newcommand{\R}{\mathbb{R}}
\newcommand{\N}{\mathbb{N}}
\newcommand{\T}{\top}
\newcommand{\I}{\mathbf{I}}
\newcommand{\0}{\mathbf{0}}
\newcommand{\diag}{\text{diag}}
\newcommand{\tsup}[1]{\textsuperscript{#1}}
\newcommand{\bm}[1]{\begin{bmatrix}#1\end{bmatrix}}
\def\BibTeX{{\rm B\kern-.05em{\sc i\kern-.025em b}\kern-.08em
    T\kern-.1667em\lower.7ex\hbox{E}\kern-.125emX}}
\newtheorem{assumption}{\textbf{Assumption}}
\newtheorem{theorem}{\textbf{Theorem}}
\newtheorem{corollary}{\textbf{Corollary}}
\newtheorem{lemma}{\textbf{Lemma}}
\newtheorem{proposition}{\textbf{Proposition}}
\newtheorem{remark}{\textbf{Remark}}
\newtheorem{definition}{\textbf{Definition}}
\begin{document}

\title{
Dissipativity-Based Distributed Droop-Free Controller and Communication Topology Co-Design for DC Microgrids
\\
}


\author{Mohammad Javad Najafirad and Shirantha Welikala 
\thanks{The authors are with the Department of Electrical and Computer Engineering, School of Engineering and Science, Stevens Institute of Technology, Hoboken, NJ 07030, \texttt{{\small \{mnajafir,swelikal\}@stevens.edu}}.}}

\maketitle

\pagenumbering{arabic}
\thispagestyle{plain}
\pagestyle{plain}

\begin{abstract}
This paper presents a novel dissipativity-based distributed droop-free control approach for the voltage regulation problem in DC microgrids (MGs) comprised of an interconnected set of distributed generators (DGs), loads, and power lines. First, we describe the closed-loop DC MG as a networked system where the sets of DGs and lines (i.e., subsystems) are interconnected via a static interconnection matrix. This interconnection matrix demonstrates how the inputs and outputs of DGs and lines are connected with each other. Each DG has a local controller and a distributed global controller. To design the distributed global controllers, we use the dissipativity properties of the subsystems and formulate a linear matrix inequality (LMI) problem. To support the feasibility of this distributed global controller design, we identify a set of necessary local conditions, which we then enforce in a specifically developed LMI-based local controller design process. In contrast to existing DC MG control solutions that separate distributed controller and communication topology design problems, our approach proposes a unified framework for distributed controller and communication topology co-design. As the co-design process is LMI-based, it can be efficiently implemented and evaluated using existing software tools. The effectiveness of the proposed solution in terms of voltage regulation and current sharing is verified by simulating an islanded DC MG in a MATLAB/Simulink environment under different scenarios, such as load changes and topological constraint changes, and comparing its performance with the recent droop control approach. 

\end{abstract}

\noindent 
\textbf{Index Terms}—\textbf{Microgrid, Power Systems, Voltage Regulation, Distributed Control, Networked Systems, Dissipativity-Based Control, Topology Design.}

\section{Introduction}







Environmental and economic concerns have driven the widespread integration of renewable energy sources (RESs) such as wind, photovoltaic panels, and fuel cells into power systems. Microgrids (MGs) provide a flexible framework to interconnect distributed generators (DGs), loads, energy storage systems (ESSs), control systems, and communication protocols, operating either in grid-connected or islanded modes. As a result, MGs are viewed as cyber-physical systems due to the close interaction between electrical and communication networks \cite{wang2019cyber}. Compared with AC MGs, DC MGs are gaining popularity due to the increasing demand for DC loads, such as in data centers, electric vehicles, and green buildings that reduce energy conversion losses by eliminating the need for AC/DC converters \cite{nasirian}.

The critical control challenges in DC MGs include voltage regulation (to some prespecified reference values) and current sharing (power sharing) among DGs. Numerous control strategies have been proposed to address these challenges, including centralized \cite{mehdi2020robust}, decentralized \cite{khorsandi2014decentralized}, and distributed control \cite{nasirian}. Centralized control achieves precise, flexible, and reliable control performance, but it suffers from a single point of failure and plug-and-play (PnP) capability \cite{guerrero2010hierarchical}. Decentralized control avoids the need for global information but often results in optimal power-sharing accuracy. To address these issues, distributed control is proposed, where DGs exchange local information through a sparse communication network to enhance the reliability, flexibility, and scalability of DC MGs \cite{dehkordi2016distributed}. 

One of the most widely adopted strategies for decentralized control in DC MGs is droop control. This method adjusts the voltage of each DG in proportion to its output current, mimicking the behavior of traditional synchronous machines in AC systems. Droop control facilitates proportional power sharing among DGs without requiring direct communication. In \cite{guerrero2010hierarchical}, a comprehensive hierarchical droop-based control structure for AC and DC MGs was introduced, showing how primary droop control can be augmented with secondary layers to restore voltage deviations. Another detailed analysis of droop control in AC and DC MGs is presented in \cite{palizban2015hierarchical}, emphasizing the necessity of secondary control to compensate for voltage and frequency deviations. 

However, droop control has its limitations. One of the primary drawbacks is voltage deviation, which occurs due to the trade-off between voltage regulation and power sharing. In low-voltage DC MGs, this issue is exacerbated by differences in line impedances. For instance, \cite{guerrero2010hierarchical} demonstrates that unequal line impedances can result in significant power-sharing inaccuracies. Although modified droop control methods, such as those proposed in \cite{zhou2020distributed}, attempt to address this issue by incorporating complex communication schemes, they still face system complexity and implementation cost challenges. Similarly, \cite{nasirian2014distributed} proposes integrating secondary control to improve voltage regulation and load-sharing accuracy, but at the expense of increased communication and system overhead. To solve this issue, authors in \cite{najafirad1} proposed a novel event-triggered algorithm to significantly reduce DG communication links. However, droop control requires precise tuning of droop coefficients to balance the trade-off between voltage regulation and power-sharing accuracy, which can be difficult in dynamic environments with varying loads. This issue is further explored in \cite{shuai2016microgrid}, where system stability degrades significantly under significant load variations, prompting the need for advanced communication-based secondary control. 

Given the inherent limitations of droop control, particularly in terms of voltage regulation and sensitivity to line impedance mismatches, researchers have begun exploring droop-free control strategies for DC MGs. One promising solution is the dissipativity-based control framework, which eliminates the need for droop characteristics by focusing on energy interactions between DGs, loads, and transmission lines. Unlike droop control, this approach ensures precise voltage regulation and proportional power sharing through energy-based control of the dynamics, making it more smooth and robust to impedance mismatches.

The proposed dissipativity-based droop-free control framework leverages the concept of dissipativity from system theory to model DGs, loads, and lines as interconnected energy systems. This control strategy ensures stability and optimal power flow across the DC MG. The problem of co-designing both distributed controllers and communication topology is formulated as a linear matrix inequality (LMI) problem, allowing for an efficient, decentralized and compositional implementation.

In all of the previously mentioned MG control strategies, the controller synthesis is typically separated from the communication topology, with the latter often assumed to be fixed or predefined. However, in real-world applications, the communication topology may vary due to the inherent intermittent nature of DGs, such as load changes or the plug-and-play (PnP) nature of DGs being added to or removed from the MG. Moreover, with advancements in communication technologies, such as software-defined radio networks, designing fixed communication topologies is no longer necessary. This flexibility opens the door to new control strategies adapting to dynamic network configurations. Thus, we are motivated to propose a co-design approach that simultaneously designs both the distributed controllers and the communication topology of the DC MG. Considering both aspects, we guarantee robust performance with an optimized communication topology (without fully connected communication topology) among all DGs. 

The main contributions of this paper can be summarized as follows:
\begin{enumerate}
\item
We formulate the DC MG control problem as a networked system control problem and propose the first (to the best of our knowledge) control theoretic co-design technique to jointly optimize the DC MG distributed controllers and the communication topology.
\item
We use a dissipativity-based distributed droop-free control approach that improves voltage regulation and current sharing, addressing the limitations of droop control while enhancing system stability.
\item 
We formulate the co-design problem as an LMI-based convex optimization problem, enabling efficient and scalable solutions as well as possible decentralized implementations.
\item 
To support the feasibility of co-design process, we introduce local controllers along with a systematic LMI-based design technique for such local controllers.
\end{enumerate}

The structure of this paper is as follows. Section \ref{Preliminaries} provides an overview of several essential notations and foundational concepts related to dissipativity and networked systems. Section \ref{problemformulation} discusses the DC MG modeling and the control problem formulation. In Section \ref{Passivity-based Control}, the proposed dissipativity-based control and topology co-design technique is introduced. The effectiveness of the proposed technique is demonstrated through simulations with comparisons in Section \ref{Simulation} before concluding the paper in Section \ref{Conclusion}.

\section{Preliminaries}\label{Preliminaries}
\subsection{Notations}
The notation $\mathbb{R}$ and $\mathbb{N}$ signify the sets of real and natural numbers, respectively. 
For any $N\in\mathbb{N}$, we define $\mathbb{N}_N\triangleq\{1,2,..,N\}$.
An $n \times m$ block matrix $A$ is denoted as $A = [A_{ij}]_{i \in \mathbb{N}_n, j \in \mathbb{N}_m}$, where $A_{ij}$ represents the $(i,j)^{\text{th}}$ block of $A$. Either subscripts or superscripts are used for indexing purposes, e.g., $A_{ij} = A^{ij}$.
$[A_{ij}]_{j\in\mathbb{N}_m}$ and $\diag([A_{ii}]_{i\in\mathbb{N}_n})$ represent a block row matrix and a block diagonal matrix, respectively.
$A^\top$ represents the transpose of a matrix $A$, and $(A^\top)^{-1}=A^{-\top}$.
$\0$ and $\I$, respectively, are the zero and identity matrices (their dimensions will be clear from the context). 
The representation of a symmetric positive definite (semi-definite) matrix $A\in\mathbb{R}^{n\times n}$ is $A>0\ (A\geq0)$. The symbol $\star$ represents conjugate matrices inside symmetric matrices. The Hermitian part of a matrix $A$ is defined as $\mathcal{H}(A)\triangleq A + A^\T$.
\subsection{Dissipativity}
Consider a dynamic control system
\begin{equation}\label{dynamic}
\begin{aligned}
    \dot{x}(t)=f(x(t),u(t)),\\
    y(t)=h(x(t),u(t)),
    \end{aligned}
\end{equation}
where $x(t)\in\mathbb{R}^n$, $u(t)\in\mathbb{R}^q$, $y(t)\in\mathbb{R}^m$, and $f:\mathbb{R}^n\times\mathbb{R}^q\rightarrow\mathbb{R}^n$ and $h:\mathbb{R}^n\times\mathbb{R}^q\rightarrow\mathbb{R}^m$ are continuously differentiable. 
The equilibrium points of (\ref{dynamic}) are such that there is a unique $u^*\in\mathbb{R}^q$ such that $f(x^*,u^*)=0$ for any $x^*\in\mathcal{X}$, where $\mathcal{X}\subset\mathbb{R}^n$ is the set of equilibrium states. And both $u^*$ and $y^*\triangleq h(x^*,u^*)$ are implicit functions of $x^*$.

The \textit{equilibrium-independent-dissipativity} (EID) \cite{arcak2022} is defined next to examine dissipativity of (\ref{dynamic}) without knowing exactly where its equilibrium points are. 

\begin{definition}
The system \eqref{dynamic} is called EID under supply rate $s:\mathbb{R}^q\times\mathbb{R}^m\rightarrow\mathbb{R}$ if there is a continuously differentiable storage function $V:\mathbb{R}^n\times\mathcal{X}\rightarrow\mathbb{R}$ such that  $V(x,x^*)>0$ when $x\neq x^*$, $V(x^*,x^*)=0$, and $\dot{V}(x,x^*)=\nabla_xV(x,x^*)f(x,u)\leq s(u-u^*,y-y^*)$, for all $(x,x^*,u)\in\mathbb{R}^n\times\mathcal{X}\times\mathbb{R}^q$.
\end{definition}

This $\textit{EID}$ property may be specialized depending on the utilized supply rate $s(.,.)$.
The concept of the $\textit{X-EID}$ property is defined in the sequel. This property is characterized by a quadratic supply rate, which is determined by its coefficient matrix $X=X^\top\in\mathbb{R}^{q+m}$ \cite{WelikalaP32022}.

\begin{definition}
The system (\ref{dynamic}) is $\textit{X-EID}$, where $X\triangleq[X^{kl}]_{k,l\in\mathbb{N}_2}$, if it is $\textit{EID}$ under the quadratic supply rate:
\begin{equation*}
    s(u-u^*,y-y^*)\triangleq\begin{bmatrix}
        u-u^* \\ y-y^*
    \end{bmatrix}^\top\begin{bmatrix}
        X^{11} & X^{12}\\ X^{21} & X^{22}
    \end{bmatrix}\begin{bmatrix}
        u-u^* \\ y-y^*
    \end{bmatrix}. 
\end{equation*}
\end{definition}

\begin{remark}\label{Rm:X-DissipativityVersions}
If the system (\ref{dynamic}) is $\textit{X-EID}$ with:\\
1)\ $X = \begin{bmatrix}
    \0 & \frac{1}{2}\I \\ \frac{1}{2}\I & \0
\end{bmatrix}$, then it is passive;\\
2)\ $X = \begin{bmatrix}
    -\nu\I & \frac{1}{2}\I \\ \frac{1}{2}\I & -\rho\I
\end{bmatrix}$, then it is strictly passive ($\nu$ and $\rho$ are the input feedforward and output feedback passivity indices, appropriately denoted as IFP($\nu$), OFP($\rho$) or IF-OFP($\nu,\rho$));\\
3) \ $X = \begin{bmatrix}
    \gamma^2\I & \0 \\ \0 & -\I
\end{bmatrix}$, then it is $L_2$-stable ($\gamma$ is the $L_2$-gain, denoted as $L2G(\gamma)$);\\
in an equilibrium-independent manner. 
\end{remark}

If the system (\ref{dynamic}) is linear time-invariant (LTI), a necessary and sufficient condition for the system (\ref{dynamic}) to be $\textit{X-EID}$ is provided in the following proposition as a linear matrix inequality (LMI) problem.

\begin{proposition}\label{Prop:linear_X-EID} \cite{welikala2023platoon}
The LTI system
\begin{equation}\label{Eq:Prop:linear_X-EID_1}
\begin{aligned}
    \dot{x}(t)=&Ax(t)+Bu(t),\\
    y(t)=&Cx(t)+Du(t),
\end{aligned}
\end{equation}
is $\textit{X-EID}$ if and only if there exists $P>0$ such that
\begin{equation}\label{Eq:Prop:linear_X-EID_2}
    \begin{bmatrix}
        -\mathcal{H}(PA)+C^\top X^{22}C & -PB+C^\top X^{21}+C^\top X^{22}D\\
        \star & X^{11}+\mathcal{H}(X^{12}D)+D^\top X^{22}D
    \end{bmatrix}\geq0.
\end{equation}
\end{proposition}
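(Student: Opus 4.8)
The plan is to read this proposition as an equilibrium-independent version of the Kalman--Yakubovich--Popov lemma and to prove it by shifting to the origin and then completing squares. First I would pass to the deviation variables $\tilde{x}\triangleq x-x^*$, $\tilde{u}\triangleq u-u^*$, $\tilde{y}\triangleq y-y^*$. Since $(x^*,u^*)$ is an equilibrium of \eqref{Eq:Prop:linear_X-EID_1}, i.e., $Ax^*+Bu^*=\0$, and $y^*=Cx^*+Du^*$, these deviations satisfy the \emph{same} LTI relations $\dot{\tilde{x}}=A\tilde{x}+B\tilde{u}$ and $\tilde{y}=C\tilde{x}+D\tilde{u}$. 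Moreover, for a fixed $x^*\in\mathcal{X}$ the corresponding $u^*$ is uniquely fixed while $(x,u)$ is free, so $(\tilde{x},\tilde{u})$ ranges over all of $\mathbb{R}^n\times\mathbb{R}^q$. Consequently, \textit{X-EID} of \eqref{Eq:Prop:linear_X-EID_1} is equivalent to dissipativity of the deviation system under the quadratic supply rate $s(\tilde{u},\tilde{y})$ with a storage function vanishing at the origin, and the natural certificate to try is the quadratic storage function $V(x,x^*)=\tilde{x}^\top P\tilde{x}$ with $P>0$.

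For sufficiency, suppose \eqref{Eq:Prop:linear_X-EID_2} holds for some $P>0$ and take $V$ as above; it clearly satisfies $V(x,x^*)>0$ for $x\neq x^*$ and $V(x^*,x^*)=0$. Differentiating along \eqref{Eq:Prop:linear_X-EID_1} gives $\dot{V}=\tilde{x}^\top\mathcal{H}(PA)\tilde{x}+2\tilde{x}^\top PB\tilde{u}$. Substituting $\tilde{y}=C\tilde{x}+D\tilde{u}$ into $s(\tilde{u},\tilde{y})$ and using $X=X^\top$ to merge the cross terms, a routine rearrangement shows $s(\tilde{u},\tilde{y})-\dot{V}=\xi^\top W\xi$ with $\xi\triangleq[\tilde{x}^\top\ \tilde{u}^\top]^\top$ and $W$ exactly the block matrix on the left-hand side of \eqref{Eq:Prop:linear_X-EID_2}. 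Hence $W\geq0$ yields $\dot{V}\leq s(\tilde{u},\tilde{y})$ for all admissible $(x,x^*,u)$, i.e., \textit{X-EID}.

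For necessity, the same identity $s-\dot{V}=\xi^\top W\xi$ read in reverse shows that if \textit{X-EID} is certified by a quadratic storage function $\tilde{x}^\top P\tilde{x}$ with $P>0$, then $s-\dot{V}\geq0$ on all of $\mathbb{R}^n\times\mathbb{R}^q$ forces $W\geq0$, which is \eqref{Eq:Prop:linear_X-EID_2}. The one nontrivial step --- and the place I expect the real work --- is to guarantee that \textit{X-EID} holding with some merely $C^1$ storage function implies it holds with a quadratic, positive-definite one. This is the classical converse dissipativity result for LTI systems: evaluating the dissipation inequality pointwise at $t=0$ over all states and inputs constrains $\nabla_x V$, and a reachability argument (equivalently, the observation that the available storage of the shifted LTI system is a quadratic form) produces the matrix $P$, whose positive definiteness is inherited from the positivity requirement on storage functions. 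I would invoke \cite{welikala2023platoon} and the underlying dissipativity theory for this step rather than reproduce it, and then close with the completion-of-squares identity above.
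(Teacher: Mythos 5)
The paper does not prove this proposition; it is imported verbatim from \cite{welikala2023platoon}, so there is no in-paper argument to compare against. Your proof is the standard and correct one for this KYP-type result: the shift to deviation variables, the quadratic storage candidate, and the completion-of-squares identity $s-\dot{V}=\xi^\top W\xi$ with $W$ equal to the matrix in \eqref{Eq:Prop:linear_X-EID_2} all check out, and you correctly isolate the only genuinely nontrivial step --- the converse result that EID of an LTI system under a quadratic supply rate can always be certified by a quadratic, positive-definite storage function --- as the part to be taken from the cited dissipativity literature.
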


The following corollary takes into account a specific LTI system that has a local controller (a setup that we will see later on) and formulates an LMI problem for synthesizing this local controller to enforce (and optimize) the system $\textit{X-EID}$ property.
\begin{corollary}\label{Col.LTI_LocalController_XEID}\cite{welikala2023platoon}
The LTI system 
\begin{equation}
    \dot{x}(t)=(A+BL)x(t)+\eta(t),
\end{equation}
is $\textit{X-EID}$ (from input $\eta(t)$ to state $x(t)$) with $X^{22}<0$ if and only if there exists $P>0$ and $K$ such that
\begin{equation}
    \begin{bmatrix}
        -(X^{22})^{-1} & P & 0\\
        \star & -\mathcal{H}(AP+BK) & -\I+PX^{21}\\
        \star & \star & X^{11}
    \end{bmatrix}\geq0,
\end{equation}
and $L=KP^{-1}$.
\end{corollary}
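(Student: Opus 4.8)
The plan is to derive the stated matrix inequality directly from Proposition~\ref{Prop:linear_X-EID} by (i) specializing it to the given closed-loop system, (ii) applying a linearizing congruence transformation, and (iii) invoking a Schur-complement argument to absorb the hypothesis $X^{22}<0$. First I would regard $\dot{x}(t)=(A+BL)x(t)+\eta(t)$ with output $y(t)=x(t)$ as an instance of \eqref{Eq:Prop:linear_X-EID_1} having state matrix $A+BL$, input matrix $\I$ (the channel carrying $\eta$), and $C=\I$, $D=\0$. Proposition~\ref{Prop:linear_X-EID} then states that this system is $\textit{X-EID}$ from $\eta$ to $x$ if and only if there exists $\bar{P}>0$ with
\begin{equation*}
\begin{bmatrix}
-\mathcal{H}(\bar{P}(A+BL))+X^{22} & -\bar{P}+X^{21}\\
\star & X^{11}
\end{bmatrix}\geq 0.
\end{equation*}

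Next, I would congruence-transform this inequality by $\diag(P,\I)$ with $P\triangleq\bar{P}^{-1}>0$ (an invertible congruence, hence equivalence-preserving) and introduce the new variable $K\triangleq LP$, i.e. $L=KP^{-1}$. Using $P\bar{P}=\I$ one gets $P\,\mathcal{H}(\bar{P}(A+BL))\,P=\mathcal{H}((A+BL)P)=\mathcal{H}(AP+BK)$, while the $(1,2)$ block collapses to $-\I+PX^{21}$ and the $(2,2)$ block is unchanged, so the condition becomes
\begin{equation*}
\begin{bmatrix}
-\mathcal{H}(AP+BK)+PX^{22}P & -\I+PX^{21}\\
\star & X^{11}
\end{bmatrix}\geq 0.
\end{equation*}
Since $(\bar{P},L)\mapsto(P,K)$ is a bijection, this remains equivalent to the $\textit{X-EID}$ property.

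Finally, I would handle the term $PX^{22}P$ sitting in the $(1,1)$ block by writing
\begin{equation*}
\begin{bmatrix}PX^{22}P & \0\\ \0 & \0\end{bmatrix}
=-\begin{bmatrix}P\\ \0\end{bmatrix}(-X^{22})\begin{bmatrix}P & \0\end{bmatrix},
\end{equation*}
and, since $X^{22}<0$ makes $-X^{22}$ positive definite and invertible, apply the Schur-complement lemma to pull this factor out, which introduces the block $-(X^{22})^{-1}=(-X^{22})^{-1}>0$ and yields precisely
\begin{equation*}
\begin{bmatrix}
-(X^{22})^{-1} & P & \0\\
\star & -\mathcal{H}(AP+BK) & -\I+PX^{21}\\
\star & \star & X^{11}
\end{bmatrix}\geq 0,
\end{equation*}
together with $L=KP^{-1}$, which is the claim. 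The only genuinely delicate points are the change of variables $K=LP$, which is what removes the bilinear product $PBL$, and the role of the hypothesis $X^{22}<0$: it is needed both so that $X^{22}$ is invertible in the Schur step and so that the pulled-out block $-(X^{22})^{-1}$ is positive definite, which is what makes the Schur-complement step an equivalence rather than a one-sided implication; the remainder is routine bookkeeping to confirm that every step above is reversible, so that both directions of the ``if and only if'' follow.
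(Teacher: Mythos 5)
Your proof is correct and follows exactly the derivation one would expect (and that the cited reference \cite{welikala2023platoon} uses): specialize Proposition~\ref{Prop:linear_X-EID} to the closed loop with $C=\I$, $D=\0$, apply the congruence $\diag(\bar{P}^{-1},\I)$ with the change of variables $K=L\bar{P}^{-1}$ to linearize the bilinear term, and remove the resulting $PX^{22}P$ block via a Schur complement, which is an equivalence precisely because $X^{22}<0$ makes the pulled-out block positive definite. The paper itself states this corollary without proof, so there is nothing further to compare; your handling of the reversibility of each step (needed for the ``if and only if'') is the right emphasis.
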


\subsection{Networked Systems}\label{SubSec:NetworkedSystemsPreliminaries}
\subsubsection{Configuration} The networked system $\Sigma$ depicted in Fig. \ref{Networked} is composed of independent dynamic subsystems $\Sigma_i,i\in\mathbb{N}_N$, $\Bar{\Sigma}_i,i\in\mathbb{N}_{\Bar{N}}$ and a static interconnection matrix $M$ that characterizes the interconnections between the subsystems, an exogenous input signal $w(t)\in\mathbb{R}^r$ (e.g. disturbance) and an interested output signal $z(t)\in\mathbb{R}^l$ (e.g. performance). 
\begin{figure}
    \centering
    \includegraphics[width=0.5\columnwidth]{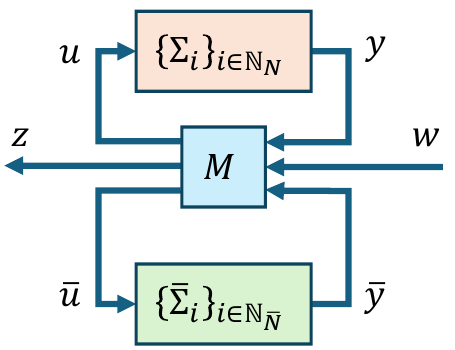}
    \caption{A generic networked system $\Sigma$.}
    \label{Networked}
\end{figure}

The dynamics of each subsystem $\Sigma_i,i\in\mathbb{N}_N$ are given by
\begin{equation}
    \begin{aligned}
        \dot{x}_i(t)&=f_i(x_i(t),u_i(t)),\\
        y_i(t)&=h_i(x_i(t),u_i(t)),
    \end{aligned}
\end{equation}
where $x_i(t)\in\mathbb{R}^{n_i}$, $u_i(t)\in\mathbb{R}^{q_i}$, $y_i(t)\in\mathbb{R}^{m_i}$. Similar to \eqref{dynamic}, each subsystem $\Sigma_i, i\in\N_N$ is considered to have a set $\mathcal{X}_i \subset \R^{n_i}$, where for every $x_i^* \in \mathcal{X}_i$, there exists a unique $u_i^* \in \R^{q_i}$ such that $f_i(x_i^*,u_i^*)=0$, and both $u_i^*$ and $y_i^*\triangleq h_i(x_i^*,u_i^*)$ are implicit function of $x_i^*$. Moreover, each subsystem $\Sigma_i, i\in\N_N$ is assumed to be $X_i$-EID, where $X_i \triangleq [X_i^{kl}]_{k,l\in\N_2}$ (known). Along the same lines, regarding each subsystem $\bar{\Sigma}_i, i\in\N_{\bar{N}}$, we use similar notations and assumptions. However, to distinguish among two types of subsystems, we include an extra bar symbol on corresponding notations to the subsystems $\bar{\Sigma}_i, i\in\N_{\bar{N}}$, e.g.,  $\bar{\Sigma}_i, i\in\N_{\bar{N}}$ is assumed to be $\bar{X}_i$-EID where $\bar{X}_i \triangleq [\bar{X}_i^{kl}]_{k,l\in\N_2}$.

On the other hand, the interconnection matrix $M$ and the corresponding interconnection relationship are given by
\begin{equation}\label{interconnectionMatrix}
    \begin{bmatrix}
        u \\ \bar{u} \\ z
    \end{bmatrix}=M\begin{bmatrix}
        y \\ \bar{y} \\ w
    \end{bmatrix}\equiv\begin{bmatrix}
        M_{uy} & M_{u\bar{y}} & M_{uw}\\
        M_{\bar{u}y} & M_{\bar{u}\bar{y}} & M_{\bar{u}w}\\
        M_{zy} & M_{z\bar{y}} & M_{zw}
    \end{bmatrix}\begin{bmatrix}
        y \\ \bar{y} \\ w
    \end{bmatrix},
\end{equation}
where $u\triangleq[u_i^\top]^\top_{i\in\mathbb{N}_N}$, $y\triangleq[y_i^\top]^\top_{i\in\mathbb{N}_N}$ and $\Bar{u}\triangleq[\Bar{u}_i^\top]^\top_{i\in\mathbb{N}_{\Bar{N}}}$, $\Bar{y}\triangleq[y_i^\top]^\top_{i\in\mathbb{N}_{\Bar{N}}}$.

\subsubsection{Dissipativity Analysis}
Inspired by \cite{arcak2022}, the following proposition exploits the $X_i$-EID and $\bar{X}_i$-EID properties of the respective individual subsystems $\Sigma_i,i\in\mathbb{N}_N$ and $\Bar{\Sigma}_i,i\in\mathbb{N}_{\Bar{N}}$ to formulate an LMI problem to analyze the \textbf{Y}-EID property of the networked system $\Sigma$, where $\text{\textbf{Y}}=\text{\textbf{Y}}^\top =  [\text{\textbf{Y}}^{kl}]_{k,l\in\mathbb{N}_2}$ is prespecified.

\begin{remark}\label{L2stable}
A good candidate for \textbf{Y} in \textbf{Y}-EID property is 
\begin{equation}\label{Eq:YEID_Candidate}
    \mathbf{Y}=\begin{bmatrix}
        \gamma^2\mathbf{I} & 0 \\ 0 & -\mathbf{I}
    \end{bmatrix}=\begin{bmatrix}
        \Tilde{\gamma}\I & 0 \\ 0 & -\I
    \end{bmatrix},
\end{equation}
which ensures the $\mathcal{L}_2$ stability in the networked system $\Sigma$. 
\end{remark}

\begin{proposition}
   \cite{welikala2023platoon} Let $\hat{x}^*\triangleq[[x_i^*{^\top}]_{i\in\mathbb{N}_N},[\bar{x}_i^{*\top}]_{i\in\mathbb{N}_{\bar{N}}}]^\top$ be an equilibrium point of $\Sigma$ (under $w(t)=0$) with $x_i^*\in\mathcal{X}_i,\forall i\in\mathbb{N}_N$, $\bar{x}_i^*\in\bar{\mathcal{X}}_i,\forall i\in\mathbb{N}_{\bar{N}}$ and $z^*$ be corresponding output. Then, the network is \textbf{Y}-EID if there exist scalars $p_i\geq0,\forall i\in\mathbb{N}_N$ and $\bar{p}_i\geq0,\forall i\in\mathbb{N}_{\bar{N}}$ such that
\begin{equation}
    \begin{bmatrix}
        M_{uy} & M_{u\bar{y}} & M_{uw}\\
        \I & \0 & \0 \\
        M_{\bar{u}y} & M_{\bar{u}\bar{y}} & M_{\bar{u}w}\\
        \0 & \I & \0\\
        \0 & \0 & \I\\
        M_{zy} & M_{z\bar{y}} & M_{zw}
    \end{bmatrix}^\top\begin{bmatrix}
        \textbf{X}_p & \0 & \0\\
        \0 & \bar{\textbf{X}}_{\bar{p}} & \0\\
        \0 & \0  & -\textbf{Y}
    \end{bmatrix}\star\leq0,
\end{equation}
where $\textbf{X}_p = [\textbf{X}_p^{kl}]_{k,l\in\N_2}$ with 
$\textbf{X}_p^{kl} = \diag([p_iX_i^{kl}]_{i\in\N_N})$ and
$\Bar{\textbf{X}}_{\bar{p}} = [\bar{\textbf{X}}_{\bar{p}}^{kl}]_{k,l\in\N_2}$ with 
$\bar{\textbf{X}}_{\bar{p}}^{kl} = \diag([\bar{p}_i\bar{X}_i^{kl}]_{i\in\N_{\bar{N}}})$. 
\end{proposition}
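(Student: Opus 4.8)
The plan is to build a composite storage function for the whole network from the individual subsystem storage functions, differentiate it along the network trajectories, bound the derivative using the subsystems' EID inequalities, and then convert the static interconnection constraints into the claimed matrix inequality via a congruence transformation. Concretely, since each $\Sigma_i$ is $X_i$-EID there is a storage function $V_i(x_i,x_i^*)$ with $V_i>0$ for $x_i\ne x_i^*$, $V_i(x_i^*,x_i^*)=0$, and $\dot V_i\le \begin{bmatrix} u_i-u_i^*\\ y_i-y_i^*\end{bmatrix}^\top X_i\begin{bmatrix} u_i-u_i^*\\ y_i-y_i^*\end{bmatrix}$ along trajectories of $\Sigma_i$, and likewise $\bar V_i$ for each $\bar\Sigma_i$. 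I would take $V(\hat x,\hat x^*)\triangleq \sum_{i\in\N_N}p_i V_i(x_i,x_i^*)+\sum_{i\in\N_{\bar N}}\bar p_i\bar V_i(\bar x_i,\bar x_i^*)$, which is nonnegative and vanishes at $\hat x^*$ because $p_i,\bar p_i\ge 0$. Differentiating along $\Sigma$ and summing the subsystem inequalities gives $\dot V\le \sum_i p_i s_i+\sum_i\bar p_i\bar s_i$, and after stacking the subsystem signals (re-ordering so that all inputs come first and then all outputs) the right-hand side equals $\begin{bmatrix} u-u^*\\ y-y^*\end{bmatrix}^\top\mathbf{X}_p\begin{bmatrix} u-u^*\\ y-y^*\end{bmatrix}+\begin{bmatrix} \bar u-\bar u^*\\ \bar y-\bar y^*\end{bmatrix}^\top\bar{\mathbf{X}}_{\bar p}\begin{bmatrix} \bar u-\bar u^*\\ \bar y-\bar y^*\end{bmatrix}$, with $\mathbf{X}_p$ and $\bar{\mathbf{X}}_{\bar p}$ exactly as defined in the statement.

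Next I would use the interconnection relation (\ref{interconnectionMatrix}). Evaluated at the equilibrium with $w=0$ it gives $u^*=M_{uy}y^*+M_{u\bar y}\bar y^*$, $\bar u^*=M_{\bar u y}y^*+M_{\bar u\bar y}\bar y^*$ and $z^*=M_{zy}y^*+M_{z\bar y}\bar y^*$, so subtracting shows that $u-u^*$, $\bar u-\bar u^*$ and $z-z^*$ are linear in $\zeta\triangleq[(y-y^*)^\top,(\bar y-\bar y^*)^\top,w^\top]^\top$ with coefficient matrices read directly off $M$; appending the trivial identities for $y-y^*$, $\bar y-\bar y^*$ and $w$, the map from $\zeta$ to $[(u-u^*)^\top,(y-y^*)^\top,(\bar u-\bar u^*)^\top,(\bar y-\bar y^*)^\top,w^\top,(z-z^*)^\top]^\top$ is exactly the tall block matrix in the statement. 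To establish $\mathbf{Y}$-EID I must show $\dot V\le \begin{bmatrix} w\\ z-z^*\end{bmatrix}^\top\mathbf{Y}\begin{bmatrix} w\\ z-z^*\end{bmatrix}$ (recall $w^*=0$), i.e. $\begin{bmatrix} u-u^*\\ y-y^*\end{bmatrix}^\top\mathbf{X}_p\star+\begin{bmatrix} \bar u-\bar u^*\\ \bar y-\bar y^*\end{bmatrix}^\top\bar{\mathbf{X}}_{\bar p}\star-\begin{bmatrix} w\\ z-z^*\end{bmatrix}^\top\mathbf{Y}\star\le 0$. Substituting the linear-in-$\zeta$ expressions turns the left-hand side into $\zeta^\top\Psi\zeta$, where $\Psi$ is precisely the symmetric congruence on the left of the stated LMI; the assumed LMI ($\Psi\le 0$) then yields the bound for every $\zeta$, hence along every trajectory, which is $\mathbf{Y}$-EID.

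The step I expect to be the main obstacle is the bookkeeping in the first paragraph: verifying that $\sum_{i}p_i s_i(u_i-u_i^*,y_i-y_i^*)$ — a sum of quadratic forms each coupling the $i$-th input and output blocks — can be rewritten, after a fixed permutation of coordinates, as the single quadratic form with the within-block-diagonal structure $\mathbf{X}_p^{kl}=\diag([p_iX_i^{kl}]_{i\in\N_N})$ (and similarly for the barred subsystems). Once this identification is made, the rest is a routine substitution of the interconnection equations and the observation that the resulting congruence coincides with the LMI in the statement. A minor point worth noting is that when some $p_i$ or $\bar p_i$ vanish, $V$ is only positive semidefinite rather than positive definite, which is still adequate to certify the dissipation inequality underlying the $\mathbf{Y}$-EID property as it is used here.
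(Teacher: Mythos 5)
Your proof is correct and follows exactly the standard argument behind this cited result: a $p_i$-weighted composite storage function, summation of the subsystem supply-rate inequalities into the block-diagonal forms $\textbf{X}_p$ and $\bar{\textbf{X}}_{\bar{p}}$, and substitution of the interconnection relations so that the dissipation inequality becomes the stated congruence in $\zeta$. The paper itself only cites this proposition without reproducing a proof, and your reconstruction (including the remark about semidefiniteness of $V$ when some $p_i$ or $\bar{p}_i$ vanish) matches the intended derivation.
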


\subsubsection{Topology Synthesis}
The following proposition develops an LMI problem to synthesize the interconnection matrix $M$ (\ref{interconnectionMatrix}) to enforce the $\textbf{Y}$-EID property for the networked system $\Sigma$. Nonetheless, similar to \cite{welikala2023non}, we must initially establish two mild assumptions.

\begin{assumption}\label{As:NegativeDissipativity}
    For the networked system $\Sigma$, the provided \textbf{Y}-EID specification is such that $\textbf{Y}^{22}<0$.
\end{assumption}
\begin{remark}
According to Rm. \ref{Rm:X-DissipativityVersions}, As. \ref{As:NegativeDissipativity} holds if the networked system $\Sigma$ must be either: (i) L2G($\gamma$) or (ii) OFP($\rho$) (or IF-OFP($\nu,\rho$)) with some $\rho>0$, i.e., $L_2$-stable or passive, respectively. Therefore, As. \ref{As:NegativeDissipativity} is mild since it is usually preferable to make the networked system $\Sigma$ either $L_2$-stable or passive.
\end{remark}

\begin{assumption}\label{As:PositiveDissipativity}
    Every subsystem $\Sigma_i$ in the networked system $\Sigma$ is $X_i$-EID with $X_i^{11}>0, \forall i\in\mathbb{N}_N$. Similarly, every subsystem $\bar{\Sigma}_i$ 
    in the networked system $\Sigma$ 
    is $\bar{X}_i$-EID with $\bar{X}_i^{11}>0, \forall i\in\N_{\bar{N}}$.
\end{assumption}

\begin{remark}
    According to Rm. \ref{Rm:X-DissipativityVersions}, As. \ref{As:PositiveDissipativity} holds if a subsystem $\Sigma_i,i\in\N_N$ is either: (i) L2G($\gamma_i$) or (ii) IFP($\nu_i$) (or IF-OFP($\nu_i,\rho_i$)) with $\nu_i<0$ (i.e., $L_2$-stable or non-passive). Since in passivity-based control, often the involved subsystems are non-passive (or can be treated as such even if they are passive), As. \ref{As:PositiveDissipativity} is also mild. 
\end{remark}

\begin{proposition}\label{synthesizeM}\cite{welikala2023platoon}
    Under Assumptions 1 and 2, the network system in Fig. \ref{Networked} can be made \textbf{Y}-EID (from $w(t)$ to output $z(t)$) by synthesizing the interconnection matrix $M$ via solving LMI problem:
\begin{equation}
\begin{aligned}
	&\mbox{Find: } 
	L_{uy}, L_{u\bar{y}}, L_{uw}, L_{\bar{u}y}, L_{\bar{u}\bar{y}}, L_{\bar{u}w}, M_{zy}, M_{z\bar{y}}, M_{zw}, \\
	&\mbox{Sub. to: } p_i \geq 0, \forall i\in\N_N, \ \ 
	\bar{p}_l \geq 0, \forall l\in\N_{\bar{N}},\ \text{and} \  \eqref{NSC4YEID},
\end{aligned}
\end{equation}
with
\scriptsize
$\bm{M_{uy} & M_{u\bar{y}} & M_{uw} \\ M_{\bar{u}y} & M_{\bar{u}\bar{y}} & M_{\bar{u}w}} = 
\bm{\textbf{X}_p^{11} & \0 \\ \0 & \bar{\textbf{X}}_{\bar{p}}^{11}}^{-1} \hspace{-1mm} \bm{L_{uy} & L_{u\bar{y}} & L_{uw} \\ L_{\bar{u}y} & L_{\bar{u}\bar{y}} & L_{\bar{u}w}}$.
\normalsize
\end{proposition}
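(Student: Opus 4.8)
The plan is to start from Proposition 4, which already certifies the $\mathbf{Y}$-EID property of $\Sigma$ via a matrix inequality of the form $\mathcal{M}^\top\,\mathrm{diag}(\mathbf{X}_p,\bar{\mathbf{X}}_{\bar p},-\mathbf{Y})\,\mathcal{M}\leq 0$, where $\mathcal{M}$ is the tall matrix assembled from the blocks of $M$ and identity blocks, and $\mathbf{X}_p,\bar{\mathbf{X}}_{\bar p}$ carry the multipliers $p_i,\bar p_i$. The obstacle is that this is a \emph{bilinear} matrix inequality in the decision variables: the interconnection blocks $M_{uy},\dots,M_{zw}$ appear inside $\mathcal{M}$ and, through $p_i,\bar p_i$, also inside the weighting matrices $\mathbf{X}_p,\bar{\mathbf{X}}_{\bar p}$. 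The whole proof consists of convexifying this BMI into the LMI \eqref{NSC4YEID} using Assumptions 1--2.

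First I would expand the quadratic form row-group by row-group, splitting off the contributions weighted by $\mathbf{X}_p$, by $\bar{\mathbf{X}}_{\bar p}$, and by $-\mathbf{Y}$. The $-\mathbf{Y}$-weighted group contains the term $[M_{zy}\ M_{z\bar y}\ M_{zw}]^\top(-\mathbf{Y}^{22})[M_{zy}\ M_{z\bar y}\ M_{zw}]$; since Assumption 1 gives $\mathbf{Y}^{22}<0$, i.e. $-\mathbf{Y}^{22}>0$, a Schur complement with respect to this positive definite block pulls $M_{zy},M_{z\bar y},M_{zw}$ into an off-diagonal position where they enter \emph{linearly}, at the price of an extra diagonal block $(\mathbf{Y}^{22})^{-1}$. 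This is why those three blocks survive verbatim as decision variables in the proposition statement and need no change of variables.

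Next I would treat the $\mathbf{X}_p$-weighted group, which contributes $\mathbf{M}_u^\top\mathbf{X}_p^{11}\mathbf{M}_u + \mathbf{M}_u^\top\mathbf{X}_p^{12}E + E^\top\mathbf{X}_p^{21}\mathbf{M}_u + E^\top\mathbf{X}_p^{22}E$, with $\mathbf{M}_u\triangleq[M_{uy}\ M_{u\bar y}\ M_{uw}]$ and $E\triangleq[\I\ \0\ \0]$. Here I would apply the change of variables $[L_{uy}\ L_{u\bar y}\ L_{uw}]=\mathbf{X}_p^{11}\mathbf{M}_u$, which is legitimate because Assumption 2 ($X_i^{11}>0$) together with $p_i>0$ makes $\mathbf{X}_p^{11}=\diag([p_iX_i^{11}]_{i\in\N_N})$ invertible. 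The point that makes this work is that $(\mathbf{X}_p^{11})^{-1}\mathbf{X}_p^{12}=\diag([(X_i^{11})^{-1}X_i^{12}]_{i\in\N_N})$ is \emph{independent of the $p_i$}: after substitution the quadratic term becomes $\mathbf{L}_u^\top(\mathbf{X}_p^{11})^{-1}\mathbf{L}_u$ (removed by one more Schur complement, yielding the diagonal block $-\mathbf{X}_p^{11}$, which is linear in $p_i$), while the cross terms become linear in $\mathbf{L}_u$ with $p$-free coefficients. The identical manipulation on the $\bar{\mathbf{X}}_{\bar p}$-group (using $\bar X_i^{11}>0$) gives $[L_{\bar u y}\ L_{\bar u\bar y}\ L_{\bar u w}]=\bar{\mathbf{X}}_{\bar p}^{11}\bar{\mathbf{M}}_{\bar u}$. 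Collecting all blocks yields exactly the LMI \eqref{NSC4YEID} in $(L_{uy},\dots,L_{\bar u w},M_{zy},M_{z\bar y},M_{zw},\{p_i\},\{\bar p_i\})$, and $M$ is recovered through the stated relation $\mathbf{M}_u=(\mathbf{X}_p^{11})^{-1}\mathbf{L}_u$ (block-wise), mirroring the argument of \cite{welikala2023non}.

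I expect the main obstacle to be precisely this bilinear coupling between the interconnection gains and the dissipativity multipliers $p_i,\bar p_i$: its resolution hinges on two non-obvious facts — that the sign conditions of Assumptions 1--2 render the relevant pivot blocks definite so the Schur complements apply and preserve equivalence, and that $p_i$ cancels in $(\mathbf{X}_p^{11})^{-1}\mathbf{X}_p^{12}$ so that only the $M_u,M_{\bar u}$ blocks (not $M_{zy},M_{z\bar y},M_{zw}$) need the linearizing substitution. A secondary point to handle with care is strictness: invertibility of $\mathbf{X}_p^{11}$ really requires $p_i>0$, so the $p_i\ge 0$ written in the statement must be read together with the nondegeneracy enforced by the change of variables (equivalently, subsystems forced to $p_i=0$ can be excluded or handled separately).
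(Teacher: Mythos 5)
The paper does not actually prove this proposition---it is imported verbatim from \cite{welikala2023platoon}---but your reconstruction is correct and is exactly the derivation used there: start from the analysis condition of Proposition 2, use $-\mathbf{Y}^{22}>0$ and $\mathbf{X}_p^{11},\bar{\mathbf{X}}_{\bar p}^{11}>0$ as Schur-complement pivots, and linearize via $L_u=\mathbf{X}_p^{11}\mathbf{M}_u$, exploiting that $(\mathbf{X}_p^{11})^{-1}\mathbf{X}_p^{12}$ is $p$-free so the cross terms come out linear. Your closing caveat is also well taken: the statement's $p_i\geq 0,\ \bar p_l\geq 0$ should really be strict (as they are in Theorem 1) for $\mathbf{X}_p^{11}$ and $\bar{\mathbf{X}}_{\bar p}^{11}$ to be invertible and for the change of variables to be well defined.
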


\begin{figure*}[!hb]
\centering
\hrulefill
\begin{equation}\label{NSC4YEID}
\scriptsize
 \bm{
		\textbf{X}_p^{11} & \0 & \0 & L_{uy} & L_{u\bar{y}} & L_{uw} \\
		\0 & \bar{\textbf{X}}_{\bar{p}}^{11} & \0 & L_{\bar{u}y} & L_{\bar{u}\bar{y}} & L_{\bar{u}w}\\
		\0 & \0 & -\textbf{Y}^{22} & -\textbf{Y}^{22} M_{zy} & -\textbf{Y}^{22} M_{z\bar{y}} & \textbf{Y}^{22} M_{zw}\\
		L_{uy}^\T & L_{\bar{u}y}^\T & - M_{zy}^\T\textbf{Y}^{22} & -L_{uy}^\T\textbf{X}^{12}-\textbf{X}^{21}L_{uy}-\textbf{X}_p^{22} & -\textbf{X}^{21}L_{u\bar{y}}-L_{\bar{u}y}^\T \bar{\textbf{X}}^{12} & -\textbf{X}^{21}L_{uw} + M_{zy}^\T \textbf{Y}^{21} \\
		L_{u\bar{y}}^\T & L_{\bar{u}\bar{y}}^\T & - M_{z\bar{y}}^\T\textbf{Y}^{22} & -L_{u\bar{y}}^\T\textbf{X}^{12}-\bar{\textbf{X}}^{21}L_{\bar{u}y} & 		-(L_{\bar{u}\bar{y}}^\T \bar{\textbf{X}}^{12} + \bar{\textbf{X}}^{21}L_{\bar{u}\bar{y}}+\bar{\textbf{X}}_{\bar{p}}^{22}) & -\bar{\textbf{X}}^{21} L_{\bar{u}w} + M_{z\bar{y}}^\T \textbf{Y}^{21} \\ 
		L_{uw}^\T & L_{\bar{u}w}^\T & -M_{zw}^\T \textbf{Y}^{22}& -L_{uw}^\T\textbf{X}^{12}+\textbf{Y}^{12}M_{zy} & -L_{\bar{u}w}^\T\bar{\textbf{X}}^{12}+ \textbf{Y}^{12} M_{z\bar{y}} & M_{zw}^\T\textbf{Y}^{21} + \textbf{Y}^{12}M_{zw} + \textbf{Y}^{11}
	}>0 
\end{equation}
\end{figure*}

\subsection{Graph Theory}\label{graph theory}
In this subsection, we recall some fundamental graph theoretic definitions that will be useful to describe the physical and communication topologies of a DC MG. 

We denote by $\mathcal{G}=(\mathcal{V},\mathcal{E})$ a directed graph, where the $\mathcal{V}=\{v_1,v_2,...,v_n\}$ represents the set of nodes and $\mathcal{E} = \{e_1,e_2,...,e_m\} \subseteq(\mathcal{V}\times\mathcal{V})$ represents the set of edges. For a given node $v_i\in\mathcal{V}$, the set of out-neighbors is $\mathcal{E}_i^+=\{v_j \in\mathcal{V}: (v_i,v_j)\in\mathcal{E}\}$, the set of in-neighbors is $\mathcal{E}_i^-=\{v_j\in\mathcal{V}: (v_j,v_i)\in\mathcal{E}\}$, and the set of neighbors is $\mathcal{E}_i=\mathcal{E}_i^+\cup\mathcal{E}_i^-$.
The incidence matrix $\mathcal{B} = [\mathcal{B}_{il}]_{i\in\N_n,l\in\N_m}\in \R^{n\times m}$ of this graph $\mathcal{G}$ represents the connectivity between its nodes $\mathcal{V}=\{v_1,v_2,...,v_n\}$ and edges $\mathcal{E} = \{e_1,e_2,...,e_m\}$, where   
\begin{equation}
    \mathcal{B}_{il} = 
    \begin{cases}
      +1 & \mbox{ if edge } e_l \mbox{ is directed onto the node } v_i,\\
      -1 & \mbox{ if edge } e_l \mbox{ is directed outward from node } v_i,\\
      0 & \mbox{otherwise}.
    \end{cases}
\end{equation}
The adjacency matrix $\mathcal{A} = [\mathcal{A}_{ij}]_{i\in\N_n,j\in\N_n} \in \R^{n\times n}$ of this graph represents the connectivity between its nodes $\mathcal{V}=\{v_1,v_2,...,v_n\}$, where   
\begin{equation}
    \mathcal{A}_{ij} = 
    \begin{cases}
      +1 & \mbox{ if } v_j \in \mathcal{E}_i^+,\\
      -1 & \mbox{ if } v_j \in \mathcal{E}_i^-,\\
      0 & \mbox{otherwise}.
    \end{cases}
\end{equation}




\section{Problem Formulation}\label{problemformulation}
This section presents the dynamic modeling details of the DC MG, consisting of multiple DGs and loads interconnected via transmission lines. Specifically, we adopt the model proposed in \cite{nahata}, which enables studying the role and the impact of communication and physical topologies in DC MGs given the dynamic models of DG units, loads and transmission lines. We also introduce local and global controllers for the DC MG. Finally, we derive the closed-loop networked system representation of the DC MG.

\subsection{DC MG Physical Interconnection Topology}
The physical interconnection topology of a DC MG can be represented by a directed connected graph $\mathcal{G}^p =(\mathcal{V},\mathcal{E})$ where $\mathcal{V}$ is divided into two parts (i.e., bipartite): $\mathcal{D}=\{\Sigma_i^{DG}, i\in\N_N\}$ represents the set of DGs, and $\mathcal{L}=\{\Sigma_l^{line}, l\in\N_L\}$ represents the set of transmission lines. The DGs are interconnected with each other through transmission lines. As each transmission line $\Sigma_l^{line}$ connects only two DGs, let us denote $\Sigma_l^{line} \equiv (\Sigma_j^{DG},\Sigma_k^{DG})$ where DG indices $j$ and $k$ will be determined by the line index $l$.
The interface between each DG and the DC MG is established through a point of common coupling (PCC). For simplicity, the loads are assumed to be connected to the DG terminals at the respective PCCs. Indeed, the loads can be moved to PCCs using Kron reduction even if they are in a different location \cite{dorfler2012kron}. 
Given that every DG is only directly connected to the lines, every edge in $\mathcal{E}$ has one node in $\mathcal{D}$ and another node in $\mathcal{L}$. This results in $\mathcal{G}^p$ being a directed bipartite graph. The orientation of each edge reflects an arbitrarily chosen reference direction for positive currents. As mentioned before, the line $\Sigma_l^{line}\equiv (\Sigma_j^{DG}, \Sigma_k^{DG}), l\in\N_L$ has an in-neighbor $\Sigma_j^{DG}$ and an out-neighbor $\Sigma_k^{DG}$ since the current entering a line must exit it. 

To represent the DC MG's physical topology, we use its adjacency matrix $\mathcal{A}\in\R^{(N+L) \times (N+L)}$ that describes the connectivity between its nodes (recall that a node may be a DG or a line), where 
\begin{equation*}
   \mathcal{A} = \begin{bmatrix}
    \0_{N \times N} & \mathcal{B} \\
    \mathcal{B}^\T & \0_{L\times L} 
\end{bmatrix} \in \R^{(N+L) \times (N+L)},
\end{equation*} 
and $\mathcal{B}\in\R^{N \times L}$ is the incident matrix of the DG network (where nodes are just the DGs and edges are just the transmission lines). Note that $\mathcal{B}$ can also be viewed as a special adjacency matrix of $\mathcal{G}^p$ that describes the connectivity between its two types of nodes. Therefore, $\mathcal{B}$ is also called the ``bi-adjacency'' matrix of $\mathcal{G}^p$.
In particular, the bi-adjacency matrix $\mathcal{B} \in \R^{N\times L}$ is defined as $\mathcal{B}=[\mathcal{B}_{il}]_{i \in \N_N, l \in \N_L}$ where 
\begin{equation}\label{Eq:BiAdjacencyMat}
\mathcal{B}_{il}=
    \begin{cases}
      +1 & \text{if } l\in\mathcal{E}_i^+, \\
      -1 & \text{if } l\in\mathcal{E}_i^-, \\
      0 & \text{otherwise}. 
    \end{cases}
\end{equation}



\subsection{Dynamic Model of a Distributed Generator (DG)}
Each DG consists of a DC voltage source, a voltage source converter (VSC), and some RLC components. The DG $\Sigma_i^{DG},i\in\N_N$ supplies power to a specific local load at its PCC (denoted $\text{PCC}_i$). Additionally, it interconnects with other DG units via transmission lines $\{\Sigma_l^{line}:l \in \mathcal{E}_i\}$. Fig. \ref{DCMG} illustrates the schematic diagram of $\Sigma_i^{DG}$, including the load, a connected transmission line, and the local and distributed global controllers.

By applying Kirchhoff's Current Law (KCL) and Kirchhoff's Voltage Law (KVL) at $\text{PCC}_i$ on the DG side, we get the following equations for $\Sigma_i^{DG}$:
\begin{equation}
\begin{aligned}\label{DGEQ}
C_{ti}\frac{dV_i}{dt} &= I_{ti} - I_{Li}(V_i) - I_i, \\
L_{ti}\frac{dI_{ti}}{dt} &= -V_i - R_{ti}I_{ti} + V_{ti},
\end{aligned}
\end{equation}
where the parameters $R_{ti}$, $L_{ti}$, and $C_{ti}$ 
represent the internal resistance, internal inductance, and filter capacitance of $\Sigma_i^{DG}$, respectively. The state variables are selected as $V_i$ and $I_{ti}$, where $V_i$ is the $\text{PCC}_i$ voltage and $I_{ti}$ is the internal current. Moreover, $V_{ti}$ is the input command signal applied to the VSC, $I_{Li}(V_i)$ is the load current, and $I_i$ is the total current injected to the DC MG by $\Sigma_i^{DG}$. 
Note that $V_{ti}$, $I_{Li}(V_i)$, and $I_i$ are respectively determined by the controllers, loads, and lines at $\Sigma_i^{DG}$. For example, $I_i$ is given by 
\begin{equation}
\label{Eq:DGCurrentNetOut}
I_i=\sum_{l\in\mathcal{E}_i^+}\mathcal{B}_{il}I_l+\sum_{l\in\mathcal{E}_i^-}\mathcal{B}_{il}I_l=\sum_{l\in\mathcal{E}_i}\mathcal{B}_{il}I_l,
\end{equation}
where $I_l, l\in\mathcal{E}_i$ are line currents.

\subsection{Dynamic Model of a Transmission Line}
Each transmission line is modeled using the $\pi$-equivalent representation, where we assume that the line capacitances are consolidated with the capacitances of the DG filters. Consequently, as shown in Fig. \ref{DCMG}, the power line $\Sigma_l^{line}$ can be represented as an RL circuit with resistance $R_l$ and inductance $L_l$. By applying KVL to $\Sigma_l^{line}$, we obtain:
\begin{equation}\label{line}
    \Sigma_l^{line}:
        L_l\frac{dI_l}{dt}=-R_lI_l+\Bar{u}_l,
\end{equation}
where
\begin{equation*}
    \Bar{u}_l=V_i-V_j=\sum_{i\in \mathcal{E}_l}\mathcal{B}_{il}V_i.
\end{equation*}
\begin{figure}
    \centering
    \includegraphics[width=\columnwidth]{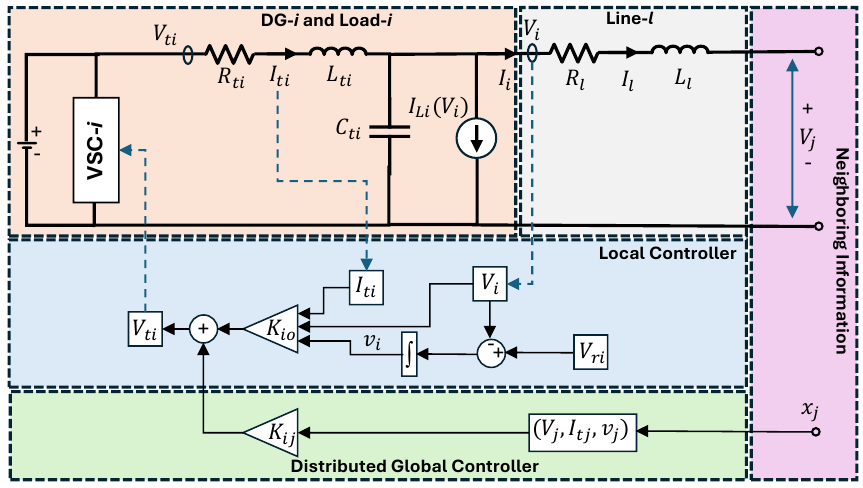}
    \caption{The electrical schematic of DG-$i$, load-$i$, $i\in\N_N$, local controller, distributed global controller, and line-$l$, $l\in\N_L$.}
    \label{DCMG}
\end{figure}

\subsection{Load Model} 
Recall that $I_{Li}(V_i)$ in Fig. \ref{DCMG} and \eqref{DGEQ} represents the current flowing through the load connected at $\Sigma_i^{DG}$. The exact form of $I_{Li}(V_i)$ depends on the type of load. In the most general case \cite{kundur2007power}, the load can be thought of as a ``ZIP'' load where $I_{Li}(V_i)$ takes the form: 
\begin{equation}\label{load}
I_{Li}(V_i) = I_{Li}^Z(V_i) + I_{Li}^I(V_i) + I_{Li}^P(V_i).
\end{equation}
Here, the ZIP load's components are: 
(i) a constant impedance load: $I_{Li}^{Z}(V_i)=Y_{Li}V_i$, where  $Y_{Li}=1/R_{Li}$ is the conductance of the load,
(ii) a constant current load: $I_{Li}^{I}(V_i)=\bar{I}_{Li}$ where $\bar{I}_{Li}$ is the current demand of the load, and
(iii) a constant power load: $I_{Li}^{P}(V_i)=V_i^{-1}P_{Li}$, where $P_{Li}$ represents the power demand of the load.

\subsection{Voltage Regulation and Current Sharing Formulation}




In this section, we formulate two common control objectives for DC MGs. First, by assuming $P_{Li}\triangleq0$, we get the dynamics of the DC MG by combining the DG and line dynamics in (\ref{DGEQ}) and (\ref{line}):
\begin{equation}\label{Compactformofdynamics}
\begin{aligned}
    C_t\dot{V} &= I_t-Y_{L}V-I_{L}-\mathcal{B}I,\\
    L_t\dot{I}_t &= -V - R_tI_t + u,\\
    L\dot{I} &= -RI - \mathcal{B}^\top V.
 \end{aligned}
\end{equation}
where $I_t,V,u\in\mathbb{R}^N$, and $I\in\mathbb{R}^L$ are vectorized forms of $I_{ti}$, $V_{i}$, $u_i$ and $I_{i}$, respectively. 
Moreover, $C_t,L_t,Y_L,I_L\in\mathbb{R}^{N\times N}$ and $R,L\in\mathbb{R}^{L\times L}$ are positive definite diagnoal matrices, e.g., $C_t=\diag\{C_{t1},...,C_{tN}\}$.
We notice that for a given constant input $\bar{u}$, a steady state solution $(\Bar{I}_t,\Bar{I},\Bar{V})$ to the above system (\ref{Compactformofdynamics}) satisfies:
\begin{equation}
    \begin{aligned}
        \Bar{I}_t &= \mathcal{B}\Bar{I} + Y_{L}\Bar{V} + I_L,\\
        \Bar{V} &=  R_t\Bar{I}_t - \Bar{u},\\
        \Bar{I} &= -R^{-1}\mathcal{B}^\T\Bar{V}.\\
    \end{aligned}
\end{equation}
This equation implies that at the steady state, the total generated current $\mathbbm{1}^\top \Bar{I}_t$ is proportional to the total current $\mathbbm{1}^\top(Y_L\Bar{V}+I_L)$ demanded by ZI load. To avoid stressing a DG unit, it is preferred that each DG supply the total demand based on its corresponding capacity ratio. This objective is equivalent to achieving $w_i\Bar{I}_{ti}=w_j\Bar{I}_{tj}$ for all $i,j\in\mathbb{N}_N$, where a $\Sigma_i^{DG}$ with a relatively large value of $w_i$ has a lesser contribution to supply the load demand. Following \cite{nasirian}, we aim to achieve average voltage regulation by selecting $1/w_i,\forall i\in\mathbb{N}$. We formulate the two control objectives for DC MGs as follows. \\
\textit{Objective 1}: (\textbf{Average voltage regulation})\\
\begin{equation}
     \lim_{t\to\infty}\mathbbm{1}^\top W^{-1}V(t)= \mathbbm{1}^\top W^{-1}\Bar{V} = \mathbbm{1}^\top W^{-1} V_r,
\end{equation}
\textit{Objective 2}: (\textbf{Current sharing})\\
\begin{equation}
    \lim_{t\to\infty}I_t(t)=\Bar{I}_t = W^{-1}\mathbbm{1}i_t^*,
\end{equation}
where $W=\diag(w_1,..,w_n), w_i>0$, for all $i\in\mathbb{N}_N$ and $i_t^*$ any scalar.

\subsection{The Local and Distributed Global Controllers}
The primary aim of the local and global controllers is to guarantee that the voltage at $\text{PCC}_i$ closely follows a predetermined reference voltage $V_{ri}(t)$ at each $\Sigma_i^{DG},i\in\N_N$. 
In the absence of voltage stabilization in the DC MG, there is a potential for voltages to exceed critical regulatory thresholds, leading to detrimental consequences for the loads as well as for the entire DC MG infrastructure. 

At $\Sigma_i^{DG}, i\in\N_N$, to effectively track the assigned reference voltage $V_{ri}(t)$, it is imperative to ensure that the error $e_i(t)\triangleq V_i(t)-V_{ri}(t)$ converges to zero. 
To achieve this objective, following the approach proposed in \cite{tucci2017}, we first include each $\Sigma_i^{DG}$ with an integrator state $v_i$ that follows the dynamics (see also Fig. \ref{DCMG})
\begin{equation}\label{error}
    \frac{dv_i}{dt}=e_i(t)=V_i(t)-V_{ri}(t).
\end{equation}
 Then, each $\Sigma_i^{DG}$ is equipped with a local state feedback controller
 \begin{equation}\label{Controller}
   u_{i0}(x_i(t)) 
   \triangleq K_{i0}x_i(t),
 \end{equation}
where $x_i \triangleq \begin{bmatrix}
    V_i &  I_{ti} & v_i
\end{bmatrix}^\top\in\mathbb{R}^3$ denotes the augmented state of $\Sigma_i^{DG}$ and $K_{i0}=\begin{bmatrix}
    k_{i0}^V & k_{i0}^I & k_{i0}^v
\end{bmatrix}\in\mathbb{R}^{1\times3}$ is the local state feedback controller gain matrix.
However, this local controller does not guarantee global stability in the presence of other interconnected DGs. To address this issue, as shown in Fig. \ref{DCMG}, a distributed global controller 
\begin{equation}
    \sum_{j\in\bar{\mathcal{F}}_i^-} u_{ij}(x_j(t)) = \sum_{j\in\bar{\mathcal{F}}_i^-} k_{ij}x_j(t),
\end{equation} 
is employed where each $k_{ij}=\begin{bmatrix}
    k_{ij}^V & k_{ij}^I & k_{ij}^v
\end{bmatrix}\in\mathbb{R}^{1\times3}$ is a distributed global state feedback controller gain matrix and $\bar{\mathcal{F}}_i^-$ is the communication-wise in-neighbors including $\{i\}$.
Thus, the overall controller $u_i(t)$ applied to the VSC of $\Sigma_i^{DG}$ can be expressed as
\begin{equation}\label{controlinput}
    u_i(t)\triangleq V_{ti}(t) =u_{i0}(x_i(t))+\sum_{j\in\bar{\mathcal{F}}_i^-} u_{ij}(x_j(t)).
\end{equation}

With the combination of the proposed local and distributed global controllers, the voltage of each $\Sigma_i^{DG}$ can track the corresponding reference voltage $V_{ri}(t)$. To guarantee voltage restoration, the distributed controller at $\Sigma_i^{DG}$ uses a sparse communication network so as to receive from and send to its communication-wise in- and out-neighbors, respectively.
In most studies, the communication topology of DC MG is predetermined. However, in this paper, we propose a novel approach to co-design the communication network and the distributed controllers. 

Note that we denote the communication topology as a directed graph $\mathcal{G}^c =(\mathcal{D},\mathcal{F})$ where $\mathcal{D}\triangleq\{\Sigma_i^{DG}, i\in\N_N\}$ (same as before) and $\mathcal{F}$ represents the set of communication links among DGs (to be designed). Note also that, for a given $\Sigma_i^{DG}$, the communication-wise out- and in-neighbors are denoted as $\mathcal{F}_i^+$ and $\mathcal{F}_i^-$, respectively. We also use the notations: $\bar{\mathcal{F}}_i^+ \triangleq \mathcal{F}_i^+ \cup \{i\}$ and $\bar{\mathcal{F}}_i^- \triangleq \mathcal{F}_i^- \cup \{i\}$ (see \eqref{controlinput}). 


\subsection{Closed-Loop Dynamics of the DC MG}

By combining the DG dynamics \eqref{DGEQ} and the error dynamics \eqref{error}, we can write the overall dynamics of $\Sigma_i^{DG}$ in the form:
\begin{align}\nonumber\label{statespacemodel}
       \frac{dV_i}{dt}&=\frac{1}{C_{ti}}I_{ti}-\frac{1}{C_{ti}}I_{Li}(V_i)-\frac{1}{C_{ti}}I_i,\\
        \frac{dI_{ti}}{dt}&=-\frac{1}{L_{ti}}V_i-\frac{R_{ti}}{L_{ti}}I_{ti}+\frac{1}{L_{ti}}u_i,\\\nonumber
        \frac{dv_i}{dt}&=V_i-V_{ri}.\nonumber
\end{align}
In \eqref{statespacemodel}, the terms $I_{Li}(V_i)$, $I_i$ and $u_i$ can be substituted with the expressions given in \eqref{load}, \eqref{Eq:DGCurrentNetOut} and  \eqref{controlinput}, respectively. Consequently, the dynamics of $\Sigma_i^{DG}$ can be stated as
\begin{equation}\label{statespacewithcontroller}
\begin{aligned}
    \begin{bmatrix}
        \dot{V}_i \\ \dot{I}_{ti} \\ \dot{v}_i
    \end{bmatrix}=&\begin{bmatrix}
        -\frac{Y_{Li}}{C_{ti}}V_i-\frac{P_{Li}}{C_{ti}}V_i^{-1}+\frac{1}{C_{ti}}I_{ti}\\
        -\frac{1}{L_{ti}}V_i - \frac{R_{ti}}{L_{ti}}I_{ti} + \frac{1}{L_{ti}}u_{i0}\\
        V_i
    \end{bmatrix}+\begin{bmatrix}
        -\frac{\Bar{I}_{Li}}{C_{ti}} \\ 0 \\ -V_{ri}
    \end{bmatrix}\\+&
    \begin{bmatrix}
        -\frac{1}{C_{ti}}\sum_{l\in\mathcal{E}_i} \mathcal{B}_{il}I_l \\ 0 \\ 0
    \end{bmatrix}+\begin{bmatrix}
        0 \\ \frac{1}{L_{ti}}\sum_{j\in\bar{\mathcal{F}}_i^-}u_{ij} \\ 0
    \end{bmatrix}.
\end{aligned}
\end{equation}
We can write \eqref{statespacewithcontroller} in the compact linear form (assuming there is no constant power load, i.e., $P_{Li}\triangleq0$, as specified in \eqref{load}): 
\begin{equation}
\label{Eq:DGCompact}
\begin{aligned}
\dot{x}_i(t)=& A_ix_i(t)+B_iu_i(t)+w_i(t)+\xi_i(t),\\
z_i(t)=& H_ix_i(t),
\end{aligned}
\end{equation}
where the exogenous input (disturbance) is defined as 
$w_i \triangleq \begin{bmatrix}
    -C_{ti}^{-1}\Bar{I}_{Li} & 0 & -V_{ri}
\end{bmatrix}^\top$,  
the transmission line coupling is defined as 
$\xi_i \triangleq \begin{bmatrix}
    -C_{ti}^{-1}\sum_{l\in \mathcal{E}_i} \mathcal{B}_{il}I_l & 0 & 0
\end{bmatrix}^\top$, and the desired performance metric (given that we want to ensure $\lim_{t\to\infty}V_i(t)=V_{ri}$) is defined as $z_i \triangleq v_i$. The system matrices $A_i$, $B_i$ and $H_i$ in \eqref{Eq:DGCompact} respectively are
\begin{equation}\label{Eq:DG_Matrix_definition}
A_i \triangleq
\begin{bmatrix}
  -\frac{Y_{Li}}{C_{ti}} & \frac{1}{C_{ti}} & 0\\
-\frac{1}{L_{ti}} & -\frac{R_{ti}}{L_{ti}} & 0 \\
1 & 0 & 0
\end{bmatrix}, 
B_i \triangleq
\begin{bmatrix}
 0 \\ \frac{1}{L_{ti}} \\ 0
\end{bmatrix}\ \mbox{ and }\ 
H_i \triangleq \bm{0 \\ 0 \\ 1}^\T.
\end{equation}

Similarly, using \eqref{line}, the state space representation of the transmission line $\Sigma_l^{Line}$ can be written in a compact form:
\begin{equation}\label{Eq:LineCompact}
    \dot{\bar{x}}_l(t) = \bar{A}_l\bar{x}_l(t) + \bar{B}_l\bar{u}_l,
 \end{equation}
where $\bar{x}_l \triangleq I_l$ is the transmission line state and  $\bar{u}_l \triangleq \sum_{i\in \mathcal{E}_l}\mathcal{B}_{il}V_i$ is the voltage difference across the transmission line.
The system matrices $\bar{A}_l$ and $\Bar{B}_l$ are defined respectively as
\begin{equation}\label{Eq:Line_Matrix_definition}
    \bar{A}_l \triangleq \begin{bmatrix}
        -\frac{R_l}{L_l} 
    \end{bmatrix}
    \quad \text{and} \quad 
    \Bar{B}_l \triangleq  \begin{bmatrix}
        \frac{1}{L_l}
    \end{bmatrix}.
\end{equation}

\subsection{Networked System Model}\label{Networked System Model}
Let us define (vectorize) 
$u\triangleq[u_i]_{i\in\N_N}$ and $\Bar{u}\triangleq[\Bar{u}_l]_{l\in\N_L}$ respectively as control inputs of DGs and lines, 
$x\triangleq[x_i]_{i\in\N_N}$ and $\Bar{x}\triangleq[\Bar{x}_l]_{l\in\N_L}$ respectively as the full state of DGs and lines, $w\triangleq[w_i]_{i\in\N_N}$ as disturbance inputs, and $z\triangleq[z_i]_{i\in\N_N}$ as performance outputs of the DC MG.

Using these vectorized quantities, we can represent the DC MG as two sets of subsystems (i.e., $\mathcal{D}=\{\Sigma_i^{DG}, i\in\N_N\}$ and $\mathcal{L}=\{\Sigma_l^{line}, l\in\N_L\}$) that are interconnected together through a static interconnection matrix ``$M$'' as shown in Fig. \ref{netwoked}. From comparing Fig. \ref{netwoked} with Fig. \ref{Networked}, it is clear that the DC MG takes the form of a standard networked system discussed in Sec. \ref{SubSec:NetworkedSystemsPreliminaries}. 

\begin{figure}
    \centering
    \includegraphics[width=0.9\columnwidth]{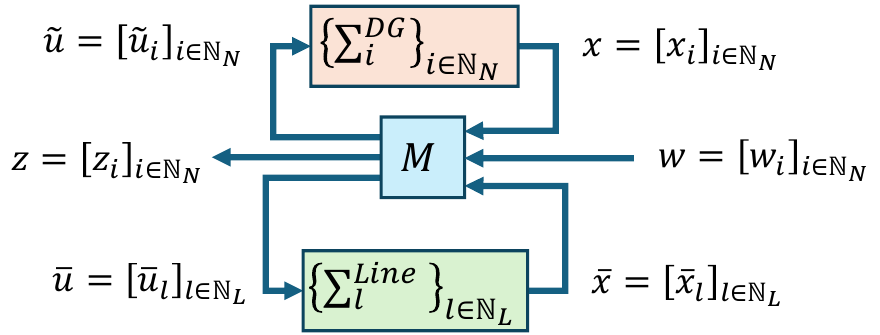}
    \caption{DC MG dynamics as a networked system configuration.}
    \label{netwoked}
\end{figure}

To identify the specific structure of the interconnection matrix $M$ in Fig. \ref{netwoked} (i.e., for DC MG), we need to closely observe how the dynamics of DGs and lines are interconnected with each other, and their dependence on disturbance inputs and their affects on performance outputs.  

To this end, let us use \eqref{Eq:DGCompact} and \eqref{controlinput} to state the closed-loop dynamics of $\Sigma_i^{DG}$ as 
\begin{equation}\label{closedloopdynamic}
    \dot{x}_i=(A_i+B_iK_{i0})x_i+\tilde{u}_i,
\end{equation}
where 
$\tilde{u}_i 
\triangleq w_i + \xi_i + B_i\sum_{j\in\bar{\mathcal{F}}_i^-}k_{ij}x_j(t)$. 
Note that $\tilde{u}_i$ in \eqref{closedloopdynamic} can be stated as 
\begin{equation}
\label{Eq:DGClosedLoopDynamics_varphi}
\tilde{u}_i =w_i+\sum_{l\in\mathcal{E}_i}\Bar{C}_{il}\Bar{x}_l+\sum_{j\in\bar{\mathcal{F}}_i^-}K_{ij}x_j,
\end{equation}
where we define 
$\Bar{C}_{il} \triangleq -C_{ti}^{-1}\begin{bmatrix}
        \mathcal{B}_{il} & 0 & 0
    \end{bmatrix}^\top, \forall l\in\mathcal{E}_i$, and 
\begin{equation}\label{k_ij}
    K_{ij} \triangleq \frac{1}{L_{ti}}\begin{bmatrix}
        0 & 0 & 0\\
        k_{ij}^V & k_{ij}^I & k_{ij}^v \\
        0 & 0 & 0
    \end{bmatrix}, \forall j\in\bar{\mathcal{F}}_i^-.
\end{equation}
By vectorizing \eqref{Eq:DGClosedLoopDynamics_varphi} over all $i\in\N_N$ and $l\in\N_L$, we get 
\begin{equation}\label{Eq:DGClosedLoopInputVector}
    \tilde{u} \triangleq [\tilde{u}_i]_{i\in \N_N} =  w+\Bar{C}\Bar{x}+Kx,
\end{equation}
where $\Bar{C}=[\Bar{C}_{il}]_{i\in\mathbb{N}_N,l\in\mathbb{N}_L}\in\R^{3N\times L}$ and $K=[K_{ij}]_{i,j\in\mathbb{N}_N} \in\R^{3N\times 3N}$.

\begin{remark}
The block matrices $K$ and $\Bar{C}$ in \eqref{Eq:DGClosedLoopInputVector} are indicative of the communication and physical topologies of the DC MG, respectively. In particular, the $(i,j)$\tsup{th} block in $K$, i.e., $K_{ij}$ indicates a communication link from $\Sigma_j^{DG}$ to $\Sigma_i^{DG}$. Similarly, $(i,l)$\tsup{th} block in $\bar{C}$ 
represents a physical link from $\Sigma_i^{DG}$ and $\Sigma_l^{Line}$ (see \eqref{Eq:BiAdjacencyMat}). Clearly, if we can design $K$ and $\Bar{C}$ in \eqref{Eq:DGClosedLoopInputVector}, we can deduce the communication and physical topologies of the DC MG. This implies that we can optimize the communication and physical topologies of the DC MG by carefully designing these matrices. However, it should also be noted that the structure, as well as some element values of these block matrices, may be predefined. For example, when the physical topology is predefined and fixed, so is the block matrix $\bar{C}$.    
\end{remark} 

In parallel to $\tilde{u}$, which represents the effective input vector to the DGs, we also need to identify $\bar{u}$, which represents the effective input vector to the lines (see Fig. \ref{netwoked}). For this purpose, let us state the closed-loop dynamics of $\Sigma_l^{Line}$ as \eqref{Eq:LineCompact} where  
\begin{equation}\label{Eq:linecontroller}
    \Bar{u}_l=\sum_{i\in\mathcal{E}_l}C_{il}x_i,
\end{equation}
with $C_{il}\triangleq \begin{bmatrix}
    \mathcal{B}_{il} & 0 & 0
\end{bmatrix}, \forall l\in\mathcal{E}_i$. 
Note also that $C_{il} = -C_{ti}\bar{C}_{il}^\T$.
By vectorizing \eqref{Eq:linecontroller} over all $l\in\N_L$ and $i\in\N_N$, we get
\begin{equation}\label{Eq:vectorizedlinecontroller}
    \Bar{u}=C x,
\end{equation}
where $C=[C_{il}]_{l\in\mathbb{N}_L,i\in\mathbb{N}_N} \in \R^{L\times  3N}$ (recall that $x \in\R^{3N\times 1}$ and $\bar{u}\in\R^{L\times 1}$). Note also that $C = - \bar{C}^\T C_t$ where $C_t \triangleq \diag(C_{ti}\I_{3}: i\in\N_N) \in \R^{3N\times 3N}$. 


Finally, we need to identify $z$, which represents the vectorized performance output of DC MG (see Fig. \ref{netwoked}). To this end, let us use the output equation in \eqref{Eq:DGCompact}
\begin{equation}\label{Eq:outputperformance}
    z_i = H_i x_i.
\end{equation}
Therefore, vectorizing \eqref{Eq:outputperformance} over all $i\in\N_N$ we get
\begin{equation}\label{Eq:vectorizedoutputperformance}
    z=Hx,
\end{equation}
where $H \triangleq \diag(H_i:i\in\N_N)\in\R^{N\times 3N}$.

Based on \eqref{Eq:DGClosedLoopDynamics_varphi}, \eqref{Eq:vectorizedlinecontroller}, and \eqref{Eq:vectorizedoutputperformance}, and the interconnection relationship
\begin{equation}
    \bm{\tilde{u} \\ \bar{u} \\ z} =  M \bm{x \\ \bar{x} \\ w}, 
\end{equation}
we now can identify the structure of the interconnection matrix $M$ in Fig. \ref{netwoked} as 
\begin{equation}\label{Eq:MMatrix}
    M \triangleq 
    \begin{bmatrix}
        M_{\tilde{u} x} & M_{\tilde{u}\Bar{x}} & M_{\tilde{u} w} \\
        M_{\Bar{u}x} & M_{\Bar{u}\Bar{x}} & M_{\Bar{u}w} \\
        M_{zx} & M_{z\Bar{x}} & M_{zw}\\
    \end{bmatrix} 
    \equiv
    \begin{bmatrix}
        K & \Bar{C} & \I \\
        C & \0 & \0 \\
        H & \0 & \0
    \end{bmatrix}.
\end{equation}

Note that the block matrix $H$ is predefined, and when the physical topology is predefined, so are the block matrices $\Bar{C}$ and $C$ (recall $C = -\bar{C}^\T C_t$). This leaves only the block matrix $K$ inside the block matrix $M$ as a tunable quantity to optimize the desired input-output (from $w$ to $z$) properties of the closed-loop DC MG system. Note that synthesizing $K$ simultaneously determines the distributed global controllers and the communication topology. In the following section, we provide a systemic dissipativity-based approach to synthesize this block matrix $K$.


\section{Dissipativity-Based Control and Topology Co-Design}\label{Passivity-based Control}
This section is organized as follows. First, we introduce the global controller design problem for the networked system (which also determines the optimal communication topology). Next, we present the essential prerequisites for subsystem dissipativity properties to achieve a feasible and efficient global controller design. Following this, we formulate an enhanced local controller design problem to meet the identified prerequisites. Finally, we conclude with a concise overview of the overall design process.

\subsection{Global Controller Synthesis}

Consider a subsystem $\Sigma_i^{DG},i\in\mathbb{N}_N$ (\ref{closedloopdynamic}), which is assumed to be $X_i$-EID with
\begin{equation}\label{Eq:XEID_DG}
    X_i=\begin{bmatrix}
        X_i^{11} & X_i^{12} \\ X_i^{21} & X_i^{22}
    \end{bmatrix}\triangleq
    \begin{bmatrix}
        -\nu_i\mathbf{I} & \frac{1}{2}\mathbf{I} \\ \frac{1}{2}\mathbf{I} & -\rho_i\mathbf{I}
    \end{bmatrix},
\end{equation}
where $\rho_i$ and $\nu_i$ are the passivity indices of $\Sigma_i^{DG}$, i.e., each $\Sigma_i^{DG},i\in\mathbb{N}_N$ is assumed to be IF-OFP($\nu_i,\rho_i$).

Similarly, consider a subsystem $\Sigma_l^{Line},l\in\mathbb{N}_L$ \eqref{Eq:LineCompact}, which is assumed to be $\bar{X}_l$-EID with
\begin{equation}\label{Eq:XEID_Line}
    \bar{X}_l=\begin{bmatrix}
        \bar{X}_l^{11} & \bar{X}_l^{12} \\ \bar{X}_l^{21} & \bar{X}_l^{22}
    \end{bmatrix}\triangleq
    \begin{bmatrix}
        -\bar{\nu}_l\mathbf{I} & \frac{1}{2}\mathbf{I} \\ \frac{1}{2}\mathbf{I} & -\bar{\rho}_l\mathbf{I}
    \end{bmatrix},
\end{equation}
where $\bar{\rho}_l$ and $\bar{\nu}_l$ are the line passivity indices.

Unlike the passivity properties of the DGs, note that we can comment on the passivity properties of lines directly using Prop. \ref{Prop:linear_X-EID} due to the simplicity of the line dynamics \eqref{Eq:LineCompact} (with outputs defined as $y_l = \bar{x}_l$).

\begin{lemma}\label{Lm:LineDissipativityStep}
For each subsystem $\Sigma_l^{Line}, l\in\N_L$ \eqref{Eq:LineCompact}, its passivity indices $\bar{\nu}_l$ and $\bar{\rho}_l$ assumed in (\ref{Eq:XEID_Line}) are such that the LMI problem: 
\begin{equation}\label{Eq:Lm:LineDissipativityStep1}
\begin{aligned}
    \mbox{Find: }\ &\bar{P}_l, \bar{\nu}_l, \bar{\rho}_l\\
    \mbox{Sub. to:}\ &\bar{P}_l > 0, \ \  
    \begin{bmatrix}
        \frac{2\bar{P}_lR_l}{L_l}-\bar{\rho}_l & -\frac{\bar{P}_l}{L_l}+\frac{1}{2}\\
        \star & -\bar{\nu}_l
    \end{bmatrix}\geq0, 
\end{aligned}
\end{equation}
is feasible. The maximum feasible values for $\bar{\nu}_l$ and $\bar{\rho}_l$ respectively are $\bar{\nu}_l^{\max}=0$ and $\bar{\rho}_l^{\max}=R_l$, when $\bar{P}_l =  \frac{L_l}{2}$. 
\end{lemma}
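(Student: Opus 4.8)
The plan is to obtain this lemma as a direct specialization of Proposition \ref{Prop:linear_X-EID}. For the line subsystem \eqref{Eq:LineCompact} with output $y_l=\bar{x}_l$, the relevant (scalar) state-space data are $A=\bar{A}_l=-R_l/L_l$, $B=\bar{B}_l=1/L_l$, $C=1$ and $D=0$, and the target supply-rate matrix is $X=\bar{X}_l$ from \eqref{Eq:XEID_Line}, i.e. $X^{11}=-\bar{\nu}_l$, $X^{12}=X^{21}=\tfrac12$, $X^{22}=-\bar{\rho}_l$. Substituting this into the LMI \eqref{Eq:Prop:linear_X-EID_2}, one computes $-\mathcal{H}(PA)+C^\top X^{22}C = 2\bar{P}_lR_l/L_l-\bar{\rho}_l$, $\ -PB+C^\top X^{21}+C^\top X^{22}D = -\bar{P}_l/L_l+\tfrac12$, and $X^{11}+\mathcal{H}(X^{12}D)+D^\top X^{22}D = -\bar{\nu}_l$, so \eqref{Eq:Prop:linear_X-EID_2} collapses exactly to the $2\times 2$ inequality in \eqref{Eq:Lm:LineDissipativityStep1}. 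Hence $\Sigma_l^{Line}$ is $\bar{X}_l$-EID if and only if \eqref{Eq:Lm:LineDissipativityStep1} is feasible, which settles the first assertion once feasibility is confirmed.

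For feasibility I would just exhibit a feasible triple: with $\bar{P}_l=L_l/2$, $\bar{\nu}_l=0$, $\bar{\rho}_l=R_l$ every entry of the $2\times 2$ matrix is $0$, so it is trivially positive semidefinite while $\bar{P}_l>0$; in fact any choice $\bar{P}_l=L_l/2$, $\bar{\nu}_l\le 0$, $\bar{\rho}_l\le R_l$ works.

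To pin down the maximal indices, I would read the $2\times 2$ positive-semidefiniteness condition through its three scalar consequences — both diagonal entries nonnegative and the determinant nonnegative. The $(2,2)$ entry gives $-\bar{\nu}_l\ge 0$, so $\bar{\nu}_l\le 0$ and $\bar{\nu}_l^{\max}=0$. Imposing $\bar{\nu}_l=\bar{\nu}_l^{\max}=0$, the determinant condition reduces to $-(-\bar{P}_l/L_l+\tfrac12)^2\ge 0$, which forces the off-diagonal term to vanish, i.e. $\bar{P}_l=L_l/2$; the $(1,1)$ entry then reads $R_l-\bar{\rho}_l\ge 0$, yielding $\bar{\rho}_l^{\max}=R_l$, attained precisely at $\bar{P}_l=L_l/2$.

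There is no genuinely hard step here: the statement is a specialization of Proposition \ref{Prop:linear_X-EID} followed by elementary $2\times 2$ semidefinite bookkeeping. The only point that merits care is the \emph{order} of the argument for the extrema — namely that demanding the largest admissible $\bar{\nu}_l$ (which is $0$) is what pins $\bar{P}_l=L_l/2$, and only then does the ceiling $\bar{\rho}_l\le R_l$ emerge; loosening $\bar{\nu}_l$ strictly below $0$ would in principle allow a different trade-off, so the claimed maxima should be understood as the jointly optimal choice that the co-design will actually use.
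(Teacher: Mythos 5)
Your proposal is correct and follows essentially the same route as the paper: specialize Proposition \ref{Prop:linear_X-EID} to the scalar line data $(A,B,C,D)=(-R_l/L_l,\,1/L_l,\,1,\,0)$ to obtain the $2\times 2$ LMI, then apply the principal-minor (Sylvester) conditions to read off $\bar{\nu}_l^{\max}=0$, deduce $\bar{P}_l=L_l/2$ from the determinant condition, and hence $\bar{\rho}_l^{\max}=R_l$. Your closing remark that the two maxima are only jointly attained at $\bar{\nu}_l=0$ (and that relaxing $\bar{\nu}_l<0$ permits a different trade-off) is a correct reading of the same argument the paper makes implicitly.
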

\begin{proof}
According to Prop. \ref{Prop:linear_X-EID}, a line $\Sigma_l^{Line},l\in\N_L$ \eqref{Eq:LineCompact} is $X_i$-EID if and only if there exists $P>0$ such that \eqref{Eq:Prop:linear_X-EID_2} holds with 
$A=\bar{A}_l=-\frac{R_l}{L_l}$, $B=\bar{B}_l=\frac{1}{L_l}$, $C=\bar{C}_l=1$, $D=\bar{D}_l=0$. As the LMI variable $P$ is a scalar, let $P = \bar{P}_l$. From substituting these values, we can state that $\Sigma_l^{Line}$ is $X_i$-EID if and only if there exists $\bar{P}_l, \bar{\nu}_l$ and $\bar{\rho}_l$ such that the LMI problem \eqref{Eq:Lm:LineDissipativityStep1} is feasible. 
Further, from applying the Sylvester criterion to the second constraint in \eqref{Eq:Lm:LineDissipativityStep1}, we get its equivalent conditions: 
\begin{enumerate}
    \item $\frac{2\bar{P}_lR_l}{L_l}-\bar{\rho}_l \geq 0 \iff  \bar{\rho}_l\leq\frac{2\bar{P}_lR_l}{L_l}$;
    \item $-\bar{\nu}_l\geq 0 \iff \bar{\nu}_l \leq 0$; and
    \item $(\frac{2\bar{P}_lR_l}{L_l}-\bar{\rho}_l)(-\bar{\nu}_l)-(-\frac{\bar{P}_l}{L_l}+\frac{1}{2})^2 \geq 0$.
\end{enumerate}
From the first two conditions, we get the maximum values possible for $\bar{\nu}_l=\bar{\nu}_l^{\max} = 0$ and $\bar{\rho}_l = \bar{\rho}_l^{\max} = \frac{2\bar{P}_lR_l}{L_l} = R_l$, respectively. In the latter simplification, the last step is from the above third condition, which implies that $\bar{P}_l = \frac{L_l}{2}$ if $\bar{\nu}_l = 0$. 
\end{proof}

The interconnection matrix $M$ (\ref{Eq:MMatrix}), particularly its block $M_{\tilde{u}x}=K$, can be synthesized by applying the above assumed/proved subsystem EID properties to Prop. \ref{synthesizeM}. By synthesizing $K=[K_{ij}]_{i,j\in\N_N}$, we can uniquely compute the distributed global controller gains $\{(k_{ij}^V,k_{ij}^I,k_{ij}^v):i,j\in\mathbb{N}_N\}$ (\ref{k_ij}) along with the required communication topology.
In designing $K$, we enforce the closed-loop DC MG dynamics be $\textbf{Y}$-EID with $\textbf{Y}$ as stated in Rm. \ref{L2stable}, so as to prevent the amplification of disturbances affecting the voltage regulation performance.
The following theorem formulates the distributed global controller and communication topology co-design problem as an LMI problem, which can be solved using standard convex optimization software toolboxes \cite{Lofberg2004}. 

\begin{theorem}\label{Th:CentralizedTopologyDesign}
The closed-loop dynamics of the DC MG illustrated in Fig. \ref{netwoked} can be made finite-gain $L_2$-stable with an $L_2$-gain $\gamma$ (where $\Tilde{\gamma}\triangleq \gamma^2<\bar{\gamma}$ and $\bar{\gamma}$ is prespecified) from the disturbance input $w(t)$ to the performance output $z(t)$, by synthesizing the interconnection matrix block $M_{\tilde{u}x}=K$ (\ref{Eq:MMatrix}) via solving the LMI problem:
\begin{subequations}
\label{Eq:Th:CentralizedTopologyDesign0}
\begin{align}
&\min_{Q,\{p_i: i\in\N_N\},\{\bar{p}_l: l\in\N_L\},\tilde{\gamma}} &&\sum_{i,j\in\N_N} c_{ij} \Vert Q_{ij} \Vert_1 + c_0 \tilde{\gamma}, \label{Eq:Th:CentralizedTopologyDesign} \\
&\ \ \ \ \ \mbox{ Sub. to:}  
&&p_i > 0, \quad \forall i\in\N_N,\\
& &&\bar{p}_l > 0, \quad \forall l\in\N_L,\\   
& &&0 < \tilde{\gamma} < \bar{\gamma},  
\mbox{ and \eqref{globalcontrollertheorem}},
\end{align}
\end{subequations}
as $K = (\textbf{X}_p^{11})^{-1} Q$. 
Here, the structure of $Q\triangleq[Q_{ij}]_{i,j\in\N_N}$ mirrors that of $K\triangleq[K_{ij}]_{i,j\in\N_N}$ (e.g., the first and third rows are zeros in each block $Q_{ij}$, see \eqref{k_ij}). 
The coefficients $c_0>0$ and $c_{ij}>0,\forall i,j\in\N_N$ are predefined cost coefficients corresponding to the $L_2$-gain and communication links, respectively.
The remaining parameters are defined as follows: $\textbf{X}^{12} \triangleq \diag(-\frac{1}{2\nu_i}\I:i\in\N_N)$, $\textbf{X}^{21} \triangleq (\textbf{X}^{12})^\T$,
$\Bar{\textbf{X}}^{12} \triangleq \diag(-\frac{1}{2\Bar{\nu}_l}\I:l\in\N_L)$
$\Bar{\textbf{X}}^{21} \triangleq (\Bar{\textbf{X}}^{12})^\T$, $\textbf{X}_p^{11} \triangleq \diag(-p_i\nu_i\I:i\in\N_N)$, $\textbf{X}_p^{22} \triangleq \diag(-p_i\rho_i\I:i\in\N_N)$, 
$\Bar{\textbf{X}}_{\bar{p}}^{11} \triangleq \diag(-\bar{p}_l\bar{\nu}_l\I:l\in\N_L)$, 
$\Bar{\textbf{X}}_{\bar{p}}^{22} \triangleq \diag(-\bar{p}_l\bar{\rho}_l\I:l\in\N_L)$, and $\tilde{\Gamma} \triangleq \tilde{\gamma}\I$. 
\end{theorem}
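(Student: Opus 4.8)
The plan is to recognize that Theorem \ref{Th:CentralizedTopologyDesign} is a direct specialization of the general topology synthesis result, Proposition \ref{synthesizeM}, applied to the particular networked-system structure of the DC MG established in Section \ref{Networked System Model}. First I would verify that the two standing assumptions of Proposition \ref{synthesizeM} hold here. Assumption \ref{As:NegativeDissipativity} requires $\textbf{Y}^{22}<0$; since we take $\textbf{Y}$ as in Remark \ref{L2stable} (the $\mathcal{L}_2$-stability form with $\textbf{Y}^{22}=-\I$), this is immediate. Assumption \ref{As:PositiveDissipativity} requires each subsystem's $X^{11}$ (resp. $\bar X^{11}$) block to be positive definite; from \eqref{Eq:XEID_DG} this means $-\nu_i\I>0$, i.e. $\nu_i<0$, and from \eqref{Eq:XEID_Line} it means $-\bar\nu_l\I>0$. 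Here I would invoke Lemma \ref{Lm:LineDissipativityStep}, which establishes $\bar\nu_l^{\max}=0$; so strictly one needs $\bar\nu_l<0$, which is obtained by backing off infinitesimally from the maximal value (this slight subtlety is worth a sentence). The local controller design of the subsequent subsection is precisely what guarantees $\nu_i<0$ for the DGs, so these prerequisites are consistent with the overall design flow.

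Next I would substitute the explicit interconnection matrix $M$ from \eqref{Eq:MMatrix} into the generic LMI \eqref{NSC4YEID}. The correspondence is: $M_{uy}\leftrightarrow M_{\tilde u x}=K$, $M_{u\bar y}\leftrightarrow M_{\tilde u\bar x}=\bar C$, $M_{uw}\leftrightarrow M_{\tilde u w}=\I$, $M_{\bar u y}\leftrightarrow M_{\bar u x}=C$, $M_{\bar u\bar y}\leftrightarrow\0$, $M_{\bar u w}\leftrightarrow\0$, $M_{zy}\leftrightarrow M_{zx}=H$, $M_{z\bar y}\leftrightarrow\0$, $M_{zw}\leftrightarrow\0$. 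Because the physical topology is fixed, $\bar C$ and $C=-\bar C^\top C_t$ and $H$ are all known data, so the only free block is $K$. Following Proposition \ref{synthesizeM}, one performs the change of variables $L_{uy}=\textbf{X}_p^{11}M_{\tilde u x}=\textbf{X}_p^{11}K$ (and $L_{\bar u y}=\bar{\textbf{X}}_{\bar p}^{11}C$, etc., which are linear in the scalars $p_i,\bar p_l$ since $C$ is fixed), and renames $L_{uy}=:Q$ so that $K=(\textbf{X}_p^{11})^{-1}Q$. Substituting $X^{12}_i=\tfrac12\I$, $X^{21}_i=\tfrac12\I$ from \eqref{Eq:XEID_DG} and the defined aggregate matrices $\textbf{X}^{12}=\operatorname{diag}(-\tfrac{1}{2\nu_i}\I)$, $\textbf{X}_p^{11}=\operatorname{diag}(-p_i\nu_i\I)$, $\textbf{X}_p^{22}=\operatorname{diag}(-p_i\rho_i\I)$, and the analogous line quantities, plus $\textbf{Y}^{11}=\tilde\gamma\I$, $\textbf{Y}^{22}=-\I$, $\textbf{Y}^{12}=\textbf{Y}^{21}=\0$, collapses \eqref{NSC4YEID} into the claimed LMI \eqref{globalcontrollertheorem}; the many zero blocks of $M$ kill most of the cross-terms, which is why the stated matrix is considerably sparser than the generic one. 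The sign bookkeeping on the terms involving $\textbf{X}^{12}_p$ versus $\textbf{X}^{12}$ (one scaled by $p_i$, one not) is the place where care is needed, and I would present that substitution step explicitly rather than waving at it.

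Finally I would address the objective and the structural constraint on $Q$. By Proposition \ref{synthesizeM}, \emph{any} feasible $K$ renders the network $\textbf{Y}$-EID, hence (by Remark \ref{Rm:X-DissipativityVersions}(3) and Remark \ref{L2stable}) finite-gain $\mathcal{L}_2$-stable from $w$ to $z$ with gain $\gamma=\sqrt{\tilde\gamma}$; imposing $0<\tilde\gamma<\bar\gamma$ enforces the prescribed bound. Minimizing $c_0\tilde\gamma+\sum_{ij}c_{ij}\Vert Q_{ij}\Vert_1$ is a linear objective over the LMI feasible set, so the whole problem remains a convex (LMI + linear-cost) program; the $\ell_1$ term is the standard sparsity-promoting surrogate that discourages nonzero $Q_{ij}$ and hence unnecessary communication links, while the constraint that $Q$ ``mirrors the structure of $K$'' (zero first and third block-rows, cf. \eqref{k_ij}) is just a linear equality on the decision variables and is preserved under the invertible block-diagonal scaling $(\textbf{X}_p^{11})^{-1}$, so $K$ inherits the same sparsity pattern. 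The main obstacle I anticipate is purely the careful term-by-term reduction of the dense $6\times 6$-block inequality \eqref{NSC4YEID} to the sparse form \eqref{globalcontrollertheorem} under all the substitutions and the $L$-variable change — there is no conceptual difficulty, but it is an error-prone bookkeeping exercise, and getting every sign and every $p_i$-scaling right (especially distinguishing $\textbf{X}_p^{22}=\operatorname{diag}(-p_i\rho_i\I)$ from what would appear if one naively wrote $p_i\textbf{X}^{22}$) is where a proof could go wrong.
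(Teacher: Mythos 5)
Your proposal is correct and follows essentially the same route as the paper: the paper's proof likewise consists of viewing the closed-loop DC MG as the networked system of Fig. \ref{Networked}, applying the dissipativity properties \eqref{Eq:XEID_DG} and \eqref{Eq:XEID_Line} to Prop. \ref{synthesizeM} with the change of variables $Q=\textbf{X}_p^{11}K$, and noting that the cost \eqref{Eq:Th:CentralizedTopologyDesign} and the constraint $0<\tilde\gamma<\bar\gamma$ jointly promote sparsity and enforce the $L_2$-gain bound. Your version is in fact more explicit than the paper's (verifying Assumptions \ref{As:NegativeDissipativity} and \ref{As:PositiveDissipativity} and flagging the $\bar\nu_l<0$ strictness issue), which is a welcome addition rather than a deviation.
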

\begin{proof}
The proof follows by considering the closed-loop DC MG (shown in Fig. \ref{netwoked}) as a networked system (shown in Fig. \ref{Networked}) and applying the subsystem dissipativity properties assumed in (\ref{Eq:XEID_DG}) and (\ref{Eq:XEID_Line}) to the interconnection topology synthesis result given in Prop. \ref{synthesizeM}. Note that, we propose the cost function \eqref{Eq:Th:CentralizedTopologyDesign} to jointly optimize the communication topology and robust stability (i.e., $L_2$-gain $\gamma$) of DC MG. The last constraint arises from the specification $\gamma^2 \leq \bar{\gamma}$, which bounds the $L_2$-gain of DC MG. Overall, minimizing the proposed cost function \eqref{Eq:Th:CentralizedTopologyDesign} subject to the LMI constraints listed in \eqref{Eq:Th:CentralizedTopologyDesign0}  optimizes communication topology and robust stability while ensuring the given specification $\gamma^2 \leq \bar{\gamma}$.
\end{proof}

\begin{figure*}[!hb]
\centering
\hrulefill
\begin{equation}\label{globalcontrollertheorem}
\scriptsize
	\bm{
		\textbf{X}_p^{11} & \0 & \0 & Q & \textbf{X}_p^{11}\Bar{C} &  \textbf{X}_p^{11}\\
		\0 & \bar{\textbf{X}}_{\bar{p}}^{11} & \0 & \Bar{\textbf{X}}_{\Bar{p}}^{11}C & \0 & \0\\
		\0 & \0 & \I & H & \0 & \0\\
		Q^\T & C^\T\Bar{\textbf{X}}_{\Bar{p}}^{11} & H^\T & -Q^\T\textbf{X}^{12}-\textbf{X}^{21}Q-\textbf{X}_p^{22} & -\textbf{X}^{21}\textbf{X}_{p}^{11}\bar{C}-C^\T\bar{\textbf{X}}_{\bar{p}}^{11}\bar{\textbf{X}}^{12} & -\textbf{X}^{21}\textbf{X}_p^{11} \\
		\Bar{C}^\T\textbf{X}_p^{11} & \0 & \0 & -\Bar{C}^\T\textbf{X}_p^{11}\textbf{X}^{12}-\bar{\textbf{X}}^{21}\Bar{\textbf{X}}_{\Bar{p}}^{11}C & -\bar{\textbf{X}}_{\bar{p}}^{22} & \0\\ 
		\textbf{X}_p^{11} & \0 & \0 & -\textbf{X}_p^{11}\textbf{X}^{12} & \0 & \tilde{\Gamma}
	}>0 
\end{equation}
\end{figure*}

\subsection{Necessary Conditions on Subsystem Passivity Indices}

Based on the terms $\textbf{X}_p^{11}$, $\textbf{X}_p^{22}$, $\bar{\textbf{X}}_{\bar{p}}^{11}$, $\bar{\textbf{X}}_{\bar{p}}^{22}$, $\textbf{X}^{12}$, $\textbf{X}^{21}$, $\bar{\textbf{X}}^{12}$, and $\bar{\textbf{X}}^{21}$ appearing in \eqref{globalcontrollertheorem} included in the global controller design problem presented in Th. \ref{Th:CentralizedTopologyDesign}, it is clear that the feasibility and the effectiveness of this global controller design depend on the chosen subsystem passivity indices $\{\nu_i,\rho_i:i\in\mathbb{N}_N\}$ \eqref{Eq:XEID_DG} and  $\{\bar{\nu}_l,\bar{\rho}_l:l\in\mathbb{N}_L\}$ \eqref{Eq:XEID_Line} assumed for subsystems \eqref{closedloopdynamic} and \eqref{Eq:LineCompact}, respectively.

However, using Co. \ref{Col.LTI_LocalController_XEID} for designing the local controllers $\{u_{i0}:i\in\mathbb{N}_N\}$ \eqref{Controller}, we can obtain a desired set of subsystem passivity indices for the subsystems \eqref{closedloopdynamic}. Similarly, using Lm. \ref{Lm:LineDissipativityStep}, we can obtain a valid set of passivity indices for the subsystems \eqref{Eq:LineCompact}. While this local controller design and passivity analysis processes can be executed independently from the global controller design \eqref{globalcontrollertheorem}, clearly they can lead to an infeasible and/or ineffective global controller designs.  

Therefore, when designing such local controllers and conducting passivity analysis, one must also consider the specific conditions necessary for the feasibility and effectiveness of the eventual global controller design. The following lemma identifies a few of such conditions based on \eqref{Eq:Th:CentralizedTopologyDesign0} in Th. \ref{Th:CentralizedTopologyDesign}.

\begin{lemma}\label{Lm:CodesignConditions}
For the LMI conditions in \eqref{Eq:Th:CentralizedTopologyDesign0} in Th. \ref{Th:CentralizedTopologyDesign} to hold, it is necessary for the subsystem passivity indices 
$\{\nu_i,\rho_i:i\in\mathbb{N}_N\}$ \eqref{Eq:XEID_DG} and  $\{\bar{\nu}_l,\bar{\rho}_l:l\in\mathbb{N}_L\}$ \eqref{Eq:XEID_Line} assumed for subsystems \eqref{closedloopdynamic} and \eqref{Eq:LineCompact}, respectively, are such that the scalar inequality problem: 
\begin{subequations}\label{Eq:Lm:CodesignConditions}
\begin{align}\nonumber
\mbox{Find: }\ \ &\{(\nu_i,\rho_i,\tilde{\gamma}_i,p_i):i\in\N_N\},&&\{(\Bar{\nu}_l,\bar{\rho}_l,\bar{p}_l):l\in\N_L\}\\
\mbox{Sub. to: } &p_i > 0, &&\forall i\in\N_N, \label{Eq:Lm:CodesignConditions0}\\
&\bar{p}_l>0, &&\forall l\in\N_L, \label{Eq:Lm:CodesignConditions00}\\
&0 < \tilde\gamma_i < \bar{\gamma}, &&\forall i\in\N_N, \label{Eq:Lm:CodesignConditions1}\\
&-\frac{\tilde{\gamma}_i}{p_i}<\nu_i<0,  &&\forall i\in\N_N,  \label{Eq:Lm:CodesignConditions2}\\
&0 < \frac{1}{p_i} < \rho_i,  &&\forall i\in\N_N,    \label{Eq:Lm:CodesignConditions3}\\
&0< \frac{p_i}{4\tilde{\gamma}_i} < \rho_i,  &&\forall i\in\N_N, \label{Eq:Lm:CodesignConditions4} \\
&\Bar{\rho}_l > -\frac{p_i\nu_i}{\Bar{p}_l C_{ti}^2},  &&\forall l\in\mathcal{E}_i, \forall i\in\N_N, \label{Eq:Lm:CodesignConditions5}\\
&\Bar{\rho}_l > \frac{1}{p_i\Bar{p}_l\rho_i}(\frac{p_i}{2C_{ti}}-\frac{\Bar{p}_l}{2})^2,  &&\forall l\in\mathcal{E}_i, \forall i\in\N_N, \label{Eq:Lm:CodesignConditions6} \\
&-\frac{p_i\rho_i}{\Bar{p}_l}<\Bar{\nu}_l<0, &&\forall l\in\mathcal{E}_i , \forall i\in\N_N, \label{Eq:Lm:CodesignConditions7}
\end{align}
\end{subequations}
is feasible.
\end{lemma}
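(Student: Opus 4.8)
The plan is to use the elementary fact that a symmetric matrix is positive definite only if each of its principal submatrices is, and to read the scalar inequalities \eqref{Eq:Lm:CodesignConditions} off suitably chosen $1\times 1$ and $2\times 2$ principal submatrices of the large matrix in \eqref{globalcontrollertheorem}. Assume \eqref{Eq:Th:CentralizedTopologyDesign0} admits a feasible point, so there exist $Q$, $\{p_i>0\}_{i\in\N_N}$, $\{\bar p_l>0\}_{l\in\N_L}$ and $\tilde\gamma\in(0,\bar\gamma)$ for which \eqref{globalcontrollertheorem} holds. Set $\tilde\gamma_i\triangleq\tilde\gamma$ for all $i\in\N_N$; then \eqref{Eq:Lm:CodesignConditions0}, \eqref{Eq:Lm:CodesignConditions00} and \eqref{Eq:Lm:CodesignConditions1} hold trivially, and it remains to obtain \eqref{Eq:Lm:CodesignConditions2}--\eqref{Eq:Lm:CodesignConditions7}.

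First I would record the structural facts that make the block bookkeeping manageable: $\textbf{X}^{12}=\textbf{X}^{21}$, $\bar{\textbf{X}}^{12}=\bar{\textbf{X}}^{21}$, $\textbf{X}_p^{11}$, $\textbf{X}_p^{22}$, $\bar{\textbf{X}}_{\bar p}^{11}$, $\bar{\textbf{X}}_{\bar p}^{22}$ and $\tilde\Gamma$ are block diagonal with scalar-times-identity blocks ($-\frac{1}{2\nu_i}\I$, $-p_i\nu_i\I$, $-p_i\rho_i\I$, $-\bar p_l\bar\nu_l$, $-\bar p_l\bar\rho_l$, $\tilde\gamma\I$, etc.); each block $Q_{ij}$ mirrors $K_{ij}$ in \eqref{k_ij} and so has zero first and third rows; $H_i=\bm{0&0&1}$; and $\bar C_{il}=-C_{ti}^{-1}\bm{\mathcal{B}_{il}&0&0}^\T$, $(C)_{li}=\bm{\mathcal{B}_{il}&0&0}$ are supported on the first (voltage) coordinate, with $\mathcal{B}_{il}^2=1$ whenever $l\in\mathcal{E}_i$. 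A short computation then gives that the $(i,i)$ diagonal block of the $(4,4)$ block of \eqref{globalcontrollertheorem} equals $\frac{1}{2\nu_i}\mathcal{H}(Q_{ii})+p_i\rho_i\I$, whose $(1,1)$ and $(3,3)$ entries are exactly $p_i\rho_i$ (the $Q$-terms vanish there because $Q_{ii}$ has zero first and third rows); that the $(i,i)$ diagonal block of the $(4,6)$ block is $-\frac{p_i}{2}\I$; and that the $(1,5)$, $(2,4)$ and $(4,5)$ blocks of \eqref{globalcontrollertheorem} restricted to a pair $(i,l)$ with $l\in\mathcal{E}_i$ are $\frac{p_i\nu_i}{C_{ti}}\bm{\mathcal{B}_{il}&0&0}^\T$, $-\bar p_l\bar\nu_l\bm{\mathcal{B}_{il}&0&0}$ and $(\frac{p_i}{2C_{ti}}-\frac{\bar p_l}{2})\bm{\mathcal{B}_{il}&0&0}^\T$, respectively.

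Labelling the six block-rows/columns of \eqref{globalcontrollertheorem} as $1,\dots,6$ (states, line states, $z$, dual states, dual line states, $w$), I would then extract: (i) the $2\times 2$ submatrix on the first coordinate of block $1$ and block $6$ at DG $i$, namely $\bm{-p_i\nu_i & -p_i\nu_i\\ -p_i\nu_i & \tilde\gamma}$, yielding $\nu_i<0$ and $\tilde\gamma+p_i\nu_i>0$, i.e.\ \eqref{Eq:Lm:CodesignConditions2}; (ii) the $2\times 2$ submatrix on $z_i$ (block $3$) and the third coordinate of block $4$ at DG $i$, namely $\bm{1 & 1\\ 1 & p_i\rho_i}$, which with $p_i\rho_i>0$ yields \eqref{Eq:Lm:CodesignConditions3}; (iii) the $2\times 2$ submatrix on the first coordinate of block $4$ and block $6$ at DG $i$, namely $\bm{p_i\rho_i & -\frac{p_i}{2}\\ -\frac{p_i}{2} & \tilde\gamma}$, yielding \eqref{Eq:Lm:CodesignConditions4}; and, for each $l\in\mathcal{E}_i$, (iv) the submatrix $\bm{-p_i\nu_i & \frac{p_i\nu_i}{C_{ti}}\mathcal{B}_{il}\\ \frac{p_i\nu_i}{C_{ti}}\mathcal{B}_{il} & \bar p_l\bar\rho_l}$ on the first coordinate of block $1$ at $i$ and block $5$ at $l$, yielding \eqref{Eq:Lm:CodesignConditions5} (while positivity of the diagonal blocks of \eqref{globalcontrollertheorem} at line $l$ also gives $\bar\rho_l>0$ and $\bar\nu_l<0$); (v) the submatrix $\bm{p_i\rho_i & (\frac{p_i}{2C_{ti}}-\frac{\bar p_l}{2})\mathcal{B}_{il}\\ (\frac{p_i}{2C_{ti}}-\frac{\bar p_l}{2})\mathcal{B}_{il} & \bar p_l\bar\rho_l}$ on the first coordinate of block $4$ at $i$ and block $5$ at $l$, yielding \eqref{Eq:Lm:CodesignConditions6}; and (vi) the submatrix $\bm{-\bar p_l\bar\nu_l & -\bar p_l\bar\nu_l\mathcal{B}_{il}\\ -\bar p_l\bar\nu_l\mathcal{B}_{il} & p_i\rho_i}$ on block $2$ at $l$ and the first coordinate of block $4$ at $i$, yielding \eqref{Eq:Lm:CodesignConditions7}. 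In each case the stated inequality follows from positivity of the diagonal entries and of the $2\times 2$ determinant, using $\mathcal{B}_{il}^2=1$ and $p_i,\bar p_l,\rho_i>0$.

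The content of the proof is essentially this block bookkeeping; the step needing the most care is the evaluation of the off-diagonal blocks of \eqref{globalcontrollertheorem}, especially the $(4,5)$ block $-\textbf{X}^{21}\textbf{X}_p^{11}\bar C-C^\T\bar{\textbf{X}}_{\bar p}^{11}\bar{\textbf{X}}^{12}$ and the $(1,5)$ block $\textbf{X}_p^{11}\bar C$, together with verifying that the $Q$-dependent entries drop out at exactly the scalar positions used (thanks to $Q_{ii}$ having zero first and third rows). It is also worth noting that $\tilde\gamma_i$ is a free variable in \eqref{Eq:Lm:CodesignConditions}; this argument only forces it to equal $\tilde\gamma$, which is consistent with the conditions being necessary but not sufficient.
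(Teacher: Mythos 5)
Your proposal is correct and follows essentially the same route as the paper's proof: both extract $1\times 1$ and $2\times 2$ principal submatrices of \eqref{globalcontrollertheorem} (the same six submatrices, after noting that the $Q_{ii}$ contributions vanish on the first and third coordinates) and apply Sylvester's criterion to read off \eqref{Eq:Lm:CodesignConditions2}--\eqref{Eq:Lm:CodesignConditions7}. The only cosmetic difference is that you instantiate $\tilde\gamma_i=\tilde\gamma$ directly, whereas the paper argues via an auxiliary block-diagonal $\hat\Gamma\geq\tilde\Gamma$; both are valid since $\tilde\gamma_i$ are free variables in the scalar feasibility problem.
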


\begin{proof}
Let us define a block diagonal matrix $\hat{\Gamma} \triangleq \diag(\tilde{\gamma}_i\I:i\in\N_N)$ where $\tilde{\gamma}_i \in \R_{\geq 0}, \forall i\in\N_N$. Note that, a necessary condition for the LMI constraint \eqref{globalcontrollertheorem} (note that \eqref{globalcontrollertheorem} is included in \eqref{Eq:Th:CentralizedTopologyDesign0}) can be obtained by selecting $\tilde{\Gamma}$ in  \eqref{globalcontrollertheorem} as $\hat{\Gamma}$, when $\tilde{\Gamma}<\hat{\Gamma}$ (i.e.,  \eqref{globalcontrollertheorem}  $\implies$ \eqref{globalcontrollertheorem} with $\hat{\Gamma}$). 
Note that, $\tilde{\Gamma}<\hat{\Gamma} \iff \tilde{\gamma} \leq \tilde{\gamma}_i, \forall i\in\N_N$. Recall that, we are required to enforce $\tilde{\gamma} \leq \bar{\gamma}$ in \eqref{Eq:Th:CentralizedTopologyDesign0}. However, as
$\tilde{\gamma} \leq \bar{\gamma} \iff \tilde{\gamma} \leq \tilde{\gamma}_i \leq \bar{\gamma}, \forall i\in\N_N$, here we are required to enforce $\tilde{\gamma}_i \leq \bar{\gamma}, \forall i\in\N_N$.

For the feasibility of the LMI constraint in \eqref{globalcontrollertheorem}, the following set of necessary conditions can be identified by examining specific sub-block matrices of the matrix in \eqref{globalcontrollertheorem}:
\begin{subequations}\label{Eq:prooflemma2}
\begin{equation}\label{Eq:Lm:CodesignConditionsStep2_1}
\textbf{X}_p^{11} > 0 , \ \bar{\textbf{X}}_{\bar{p}}^{11} > 0,    
\end{equation}
\begin{equation}\label{Eq:Lm:CodesignConditionsStep2_2}
-Q^\T \textbf{X}^{12}-\textbf{X}^{21}Q-\textbf{X}_p^{22}>0,
\end{equation}
\begin{equation}\label{Eq:Lm:CodesignConditionsStep2_3}
 \Bar{\textbf{X}}_{\Bar{p}}^{22}>0,
 \end{equation}
\begin{equation}\label{Eq:Lm:CodesignConditionsStep2_4}
\bm{\textbf{X}_p^{11} & \textbf{X}_p^{11}  \\
\textbf{X}_p^{11} & \tilde{\Gamma}}>0,
\end{equation}
\begin{equation}\label{Eq:Lm:CodesignConditionsStep2_5}
\bm{\I & H  \\ H^\top & -Q^\T \textbf{X}^{12}-\textbf{X}^{21} Q-\textbf{X}_p^{22}}>0,   \end{equation}
\begin{equation}\label{Eq:Lm:CodesignConditionsStep2_6}
\bm{-Q^\T \textbf{X}^{12}-\textbf{X}^{21} Q-\textbf{X}_p^{22} &  -\textbf{X}^{21}\textbf{X}_p^{11} \\ -\textbf{X}_p^{11} \textbf{X}^{12} & \tilde{\Gamma}}>0,    
\end{equation}
\begin{equation}\label{Eq:Lm:CodesignConditionsStep2_7}
\bm{\textbf{X}_p^{11} &  \textbf{X}_p^{11}\bar{C} \\ \Bar{C}^\T\textbf{X}_p^{11} & - \Bar{\textbf{X}}_{\Bar{p}}^{22}}>0,    
\end{equation}
\begin{equation}\label{Eq:Lm:CodesignConditionsStep2_8}
\bm{-Q^\T \textbf{X}^{12}-\textbf{X}^{21}Q-\textbf{X}_p^{22} & -\textbf{X}^{21}\textbf{X}_p^{11}\bar{C}-C^\T\bar{\textbf{X}}_{\bar{p}}^{11}\bar{\textbf{X}}^{12} \\ -\bar{C}^\T\textbf{X}_p^{11}\textbf{X}^{12}-\bar{\textbf{X}}^{12}\bar{\textbf{X}}_{\bar{p}}^{11}C & -\Bar{\textbf{X}}_{\Bar{p}}^{22} }>0,    
\end{equation}
\begin{equation}\label{Eq:Lm:CodesignConditionsStep2_9}
\bm{\Bar{\textbf{X}}_{\Bar{p}}^{11} & \Bar{\textbf{X}}_{\Bar{p}}^{11}C \\ 
C^\T\Bar{\textbf{X}}_{\Bar{p}}^{11} & -Q^\T\textbf{X}^{12}-\textbf{X}^{21}Q-\textbf{X}_p^{22}}>0.
\end{equation}
\end{subequations}

The necessary conditions for \eqref{Eq:Lm:CodesignConditionsStep2_1} can be expressed as
\begin{align}
&\mbox{\eqref{Eq:Lm:CodesignConditionsStep2_1}}  \iff  -p_i\nu_i > 0 \iff  \nu_i < 0, \quad &&\forall i\in\N_N. \label{Eq:Lm:ProofConditionsStep1}\\ 
&\mbox{\eqref{Eq:Lm:CodesignConditionsStep2_1}}  \iff  -\bar{p}_l\bar{\nu}_l > 0 \iff  \bar{\nu}_l < 0, \quad &&\forall l\in\N_L. \label{Eq:Lm:ProofConditionsStep2}
\end{align}
Similarly, a necessary condition for \eqref{Eq:Lm:CodesignConditionsStep2_2} involves considering its diagonal blocks. Note also that each $Q_{ii}$ block has zeros in its diagonal for any $i\in\N_N$. Using these facts, the necessary conditions for \eqref{Eq:Lm:CodesignConditionsStep2_2} can be obtained as
\begin{align}
\mbox{\eqref{Eq:Lm:CodesignConditionsStep2_2}} &\implies\ 
-Q_{ii}^\T (-\frac{1}{2\nu_i}\I) - (-\frac{1}{2\nu_i}\I)Q_{ii} - (-p_i\rho_i\I) > 0 \nonumber \\
&\implies\ p_i\rho_i > 0 \iff \rho_i > 0, \quad \forall i\in\N_N. \label{Eq:Lm:ProofConditionsStep3}
\end{align}

Using similar arguments, necessary conditions for \eqref{Eq:Lm:CodesignConditionsStep2_3}-\eqref{Eq:Lm:CodesignConditionsStep2_9} can be obtained respectively as
\begin{align}
\mbox{\eqref{Eq:Lm:CodesignConditionsStep2_3}} &\iff  -(-\bar{p}_l\bar{\rho}_{l}\I)>0 \nonumber \\
&\implies \bar{p}_l\bar{\rho}_{l}>0 \iff \bar{\rho}_{l}>0, \quad \forall l\in\N_L. \label{Eq:Lm:ProofConditionsStep4}
\end{align}
\begin{align}\label{Eq:Lm:ProofConditionsStep5}
\mbox{\eqref{Eq:Lm:CodesignConditionsStep2_4}} \iff&\ \bm{-p_i\nu_i & -p_i\nu_i \\ -p_i\nu_i & \tilde{\gamma}_i } > 0 \iff -p_i\nu_i(\tilde{\gamma}_i + p_i\nu_i)  > 0 \nonumber \\ 
\iff&\ \nu_i > -\frac{\tilde{\gamma}_i}{p_i}, \quad \forall i\in\N_N ;
\end{align}
\begin{align}\label{Eq:Lm:ProofConditionsStep6}
\mbox{\eqref{Eq:Lm:CodesignConditionsStep2_5}} \implies&\ \bm{1 & 1 \\ 1 & p_i\rho_i} > 0 \iff \rho_i > \frac{1}{p_i}, \quad \forall i\in\N_N ; 
\end{align}
\begin{align}\label{Eq:Lm:ProofConditionsStep7}
\mbox{\eqref{Eq:Lm:CodesignConditionsStep2_6}} \implies& \bm{p_i\rho_i & -\frac{1}{2}p_i \\ -\frac{1}{2}p_i & \tilde{\gamma}_i} > 0 \iff p_i(\rho_i\tilde{\gamma}_i -\frac{1}{4}p_i) > 0 \nonumber \\
\iff&\ \rho_i > \frac{p_i}{4\tilde{\gamma}_i}, \quad \forall i\in\N_N; 
\end{align}
\begin{align}\nonumber\label{Eq:Lm:ProofConditionsStep8}
\mbox{\eqref{Eq:Lm:CodesignConditionsStep2_7}} 
&\implies
\bm{p_iX_i^{11} &  p_iX_i^{11} \bar{C}_{il} \\ \bar{C}_{il}^\T p_iX_i^{11} & -\bar{p}_l \bar{X}_l^{22}}>0\\
&\iff \bm{-p_i\nu_i & p_i\nu_i\frac{\mathcal{B}_{il}}{C_{ti}} \\ \frac{\mathcal{B}_{il}}{C_{ti}}p_i\nu_i & \Bar{p}_l\Bar{\rho}_l} > 0 \nonumber  \\
&\iff  (-p_i\nu_i\Bar{p}_l\Bar{\rho}_l)-(p_i^2\nu_i^2\frac{\mathcal{B}_{il}^2}{C_{ti}^2})>0 \nonumber \\ 
&\iff 
\Bar{\rho}_l > -\frac{p_i\nu_i}{\Bar{p}_l C_{ti}^2}, \quad \forall l\in\mathcal{E}_i;
\end{align}
\begin{align}\nonumber\label{Eq:Lm:ProofConditionsStep9}
\mbox{\eqref{Eq:Lm:CodesignConditionsStep2_8}} &\implies
\begin{aligned}
\left[ 
\begin{matrix}
    -Q_{ii}^\T (-\frac{1}{2\nu_i}) - (-\frac{1}{2\nu_i})Q_{ii}-p_iX_i^{22}\\-\bar{C}_{il}^\T p_iX_i^{11}X_i^{12}-\bar{X}_l^{12}\bar{p}_l\bar{X}_l^{11}C_{il}
\end{matrix}
\right.\\
\left.
\begin{matrix} 
-X_i^{21}p_iX_i^{11}\bar{C}_{il}-C_{il}^\T\bar{p}_l\bar{X}_l^{11}\bar{X}_l^{12} \\ -\bar{p}_l\bar{X}_l^{22} 
\end{matrix}
\right]>0
\end{aligned}\\
&\iff
\bm{p_i\rho_i & \frac{p_i\mathcal{B}_{il}}{2C_{ti}}-\frac{\mathcal{B}_{il}\Bar{p}_l}{2} \\ \frac{\mathcal{B}_{il}p_i}{2C_{ti}}-\frac{\Bar{p}_l\mathcal{B}_{il}}{2} & \Bar{p}_l\Bar{\rho}_l}>0 \nonumber \\
&\iff p_i\rho_i\Bar{p}_l\Bar{\rho}_l - (\frac{p_i}{2C_{ti}} - \frac{\Bar{p}_l}{2})^2 > 0 \nonumber \\ 
&\iff \Bar{\rho}_l > \frac{1}{p_i\Bar{p}_l\rho_i}(\frac{p_i}{2C_{ti}}-\frac{\Bar{p}_l}{2})^2, \quad \forall l\in\mathcal{E}_i;
\end{align}
\begin{align}\nonumber\label{Eq:Lm:ProofConditionsStep10}
\mbox{\eqref{Eq:Lm:CodesignConditionsStep2_9}} &\implies
\begin{aligned}
\left[ 
\begin{matrix}
    \bar{p}_l\bar{X}_l^{11} \\ C_{il}^\T\bar{p}_l\bar{X}_l^{11}
\end{matrix}
\right.\\
\left.
\begin{matrix} 
\bar{p}_l\bar{X}_l^{11}C_{il} \\ -Q_{ii}^\T (-\frac{1}{2\nu_i}) - (-\frac{1}{2\nu_i})Q_{ii}-p_iX_i^{22} 
\end{matrix}
\right]>0
\end{aligned}\\
&\iff\ \bm{-\Bar{p}_l\Bar{\nu}_l & -\Bar{p}_l\Bar{\nu}_l\mathcal{B}_{il} \\ -\mathcal{B}_{il}\Bar{p}_l\Bar{\nu}_l & p_i\rho_i}>0 \nonumber \\ 
&\iff -\Bar{p}_l\Bar{\nu}_l p_i\rho_i-\Bar{p}_l^2\Bar{\nu}_l^2 \mathcal{B}_{il}^2>0, \nonumber \\
&\iff  \Bar{\nu}_l>-\frac{p_i\rho_i}{\Bar{p}_l}, \quad \forall l\in\mathcal{E}_i.
\end{align}

In conclusion, the necessary conditions in \eqref{Eq:Lm:CodesignConditions0}, \eqref{Eq:Lm:CodesignConditions00}, and \eqref{Eq:Lm:CodesignConditions1} come directly from the conditions in the global controller \eqref{Eq:Th:CentralizedTopologyDesign0}. As shown above, \eqref{Eq:Lm:ProofConditionsStep1} and \eqref{Eq:Lm:ProofConditionsStep5} imply \eqref{Eq:Lm:CodesignConditions2}, \eqref{Eq:Lm:ProofConditionsStep3} and \eqref{Eq:Lm:ProofConditionsStep6} imply \eqref{Eq:Lm:CodesignConditions3}, \eqref{Eq:Lm:ProofConditionsStep3} and \eqref{Eq:Lm:ProofConditionsStep7} imply \eqref{Eq:Lm:CodesignConditions4}, \eqref{Eq:Lm:ProofConditionsStep4} and \eqref{Eq:Lm:ProofConditionsStep8} implies \eqref{Eq:Lm:CodesignConditions5}, \eqref{Eq:Lm:ProofConditionsStep4} and \eqref{Eq:Lm:ProofConditionsStep9} implies \eqref{Eq:Lm:CodesignConditions6}, and \eqref{Eq:Lm:ProofConditionsStep2} and \eqref{Eq:Lm:ProofConditionsStep10} imply \eqref{Eq:Lm:CodesignConditions7}. Basically, \eqref{Eq:Th:CentralizedTopologyDesign0} implies \eqref{Eq:prooflemma2} implies (\eqref{Eq:Lm:ProofConditionsStep1}-\eqref{Eq:Lm:ProofConditionsStep10}) implies \eqref{Eq:Lm:CodesignConditions}.
\end{proof}

In Lm. \ref{Lm:CodesignConditions}, we have considered $p_i, \forall i\in\N_N$ and $\bar{p}_l, \forall l\in\N_L$ as decision variables. This makes \eqref{Eq:Lm:CodesignConditions} a non-linear scalar inequality problem. However, if these variables are assumed to be prespecified, then the scalar inequality problem \eqref{Eq:Lm:CodesignConditions} can be formulated as a linear scalar inequality problem (i.e., an LMI) with the use of appropriate change of variables and convexification steps as detailed in the following corollary.

\begin{corollary}\label{Co:CodesignConditions}
For the LMI conditions in \eqref{Eq:Th:CentralizedTopologyDesign0} in Th. \ref{Th:CentralizedTopologyDesign} to hold, it is necessary that the subsystem passivity indices 
$\{\nu_i,\rho_i:i\in\mathbb{N}_N\}$ and  $\{\bar{\nu}_l,\bar{\rho}_l:l\in\mathbb{N}_L\}$ are such that the LMI problem: 
\begin{subequations}\label{Eq:Co:CodesignConditions}
\begin{align}\nonumber
\mbox{Find: }\ \ &\{(\nu_i,\tilde{\rho}_i,\tilde{\gamma}_i):i\in\N_N\},&&\{(\Bar{\nu}_l,\bar{\rho}_l):l\in\N_L\}\\
\mbox{Sub. to: } 
&0 < \tilde{\gamma}_i < \bar{\gamma}, &&\forall i\in\N_N, \label{Eq:Col:CodesignConditions0}\\
&-\frac{\tilde{\gamma}_i}{p_i}<\nu_i<0,  &&\forall i\in\N_N,  \label{Eq:Col:CodesignConditions1}\\
&0<\tilde{\rho_i}<p_i,  &&\forall i\in\N_N,   \label{Eq:Col:CodesignConditions2}\\
&0<\tilde{\rho_i}<\frac{4\tilde{\gamma}_i}{p_i},  &&\forall i\in\N_N, \label{Eq:Col:CodesignConditions3}\\
&\Bar{\rho}_l > -\frac{p_i\nu_i}{\Bar{p}_l C_{ti}^2},  &&\forall l\in\mathcal{E}_i, \forall i\in\N_N,  \label{Eq:Col:CodesignConditions4}\\
&\Bar{\rho}_l > \frac{\tilde{\rho}_i}{p_i\Bar{p}_l}(\frac{p_i}{2C_{ti}}-\frac{\Bar{p}_l}{2})^2,  &&\forall l\in\mathcal{E}_i, \forall i\in\N_N, \label{Eq:Col:CodesignConditions5}\\
&\bar{\nu}_l > m\tilde{\rho}_i + c, &&\forall l\in\mathcal{E}_i,\forall i\in\N_N, \label{Eq:Col:CodesignConditions6}
\end{align}
\end{subequations} 
is feasible, where  $p_i > 0, \forall i\in\N_N$ and $\bar{p}_l>0, \forall l\in\N_L$ are some prespecified parameters, $\rho_i\triangleq\frac{1}{\tilde{\rho}_i}$, $m\triangleq\frac{\tilde{y}_i^{\max}-\tilde{y}_i^{\min}}{\tilde{\rho}_i^{\max}-\tilde{\rho}_i^{\min}}$, and $c\triangleq\tilde{y}_i^{\max} - m\tilde{\rho}_i^{\max}$ where $\tilde{\rho}_i^{\max}\triangleq \min(p_i,\frac{4\bar{\gamma}_i}{p_i})$, $\tilde{\rho}_i^{\min}=\epsilon$, $\tilde{y}_i^{\max}=-\frac{p_i}{\Bar{p}_l\tilde{\rho}_i^{\max}}$, and $\tilde{y}_i^{\min}=-\frac{p_i}{\Bar{p}_l\tilde{\rho}_i^{\min}}$.
\end{corollary}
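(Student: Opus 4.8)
The plan is to derive Corollary \ref{Co:CodesignConditions} directly from Lemma \ref{Lm:CodesignConditions} by treating $\{p_i\}$ and $\{\bar{p}_l\}$ as prespecified constants (rather than decision variables) and then performing a change of variables together with a convexification of the two nonconvex constraints. First I would note that, with $p_i$ and $\bar{p}_l$ fixed, conditions \eqref{Eq:Lm:CodesignConditions0}--\eqref{Eq:Lm:CodesignConditions00} become trivially satisfied and hence droppable, while \eqref{Eq:Lm:CodesignConditions1} is carried over verbatim as \eqref{Eq:Col:CodesignConditions0}, and \eqref{Eq:Lm:CodesignConditions2} is carried over verbatim as \eqref{Eq:Col:CodesignConditions1}. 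The next step is the substitution $\rho_i \triangleq 1/\tilde\rho_i$ (equivalently $\tilde\rho_i = 1/\rho_i$), which is a valid bijection on the positive reals since $\rho_i>0$ by \eqref{Eq:Lm:ProofConditionsStep3}. Under this substitution, \eqref{Eq:Lm:CodesignConditions3}, namely $1/p_i < \rho_i$, becomes $\tilde\rho_i < p_i$, i.e. \eqref{Eq:Col:CodesignConditions2}; and \eqref{Eq:Lm:CodesignConditions4}, namely $p_i/(4\tilde\gamma_i) < \rho_i$, becomes $\tilde\rho_i < 4\tilde\gamma_i/p_i$, i.e. \eqref{Eq:Col:CodesignConditions3}. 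Constraint \eqref{Eq:Lm:CodesignConditions5} involves only $\nu_i$, $\bar\rho_l$ and the fixed constants, so it passes through unchanged as \eqref{Eq:Col:CodesignConditions4}.

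The substantive work is in handling the two remaining constraints \eqref{Eq:Lm:CodesignConditions6} and \eqref{Eq:Lm:CodesignConditions7}, which become nonconvex in the variable $\tilde\rho_i$. For \eqref{Eq:Lm:CodesignConditions6}, substituting $\rho_i = 1/\tilde\rho_i$ gives $\bar\rho_l > \frac{\tilde\rho_i}{p_i\bar p_l}\big(\frac{p_i}{2C_{ti}}-\frac{\bar p_l}{2}\big)^2$, which is exactly \eqref{Eq:Col:CodesignConditions5} and is in fact already linear in $\tilde\rho_i$ — so no convexification is needed there. The genuine obstacle is \eqref{Eq:Lm:CodesignConditions7}, $\bar\nu_l > -p_i\rho_i/\bar p_l = -p_i/(\bar p_l \tilde\rho_i)$: the right-hand side is a convex, decreasing function of $\tilde\rho_i$ on the positive axis, so the feasible set $\{\bar\nu_l > -p_i/(\bar p_l\tilde\rho_i)\}$ is nonconvex in $(\bar\nu_l,\tilde\rho_i)$. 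The plan is to replace this by a linear \emph{inner} (sufficient) bound obtained via a secant-line over-estimate of the function $\tilde y_i(\tilde\rho_i) \triangleq -p_i/(\bar p_l\tilde\rho_i)$. Over the interval $\tilde\rho_i \in [\tilde\rho_i^{\min},\tilde\rho_i^{\max}]$ — where $\tilde\rho_i^{\max} = \min(p_i, 4\bar\gamma_i/p_i)$ is dictated by \eqref{Eq:Col:CodesignConditions2}--\eqref{Eq:Col:CodesignConditions3} and $\tilde\rho_i^{\min}=\epsilon$ is a small positive floor — convexity of $\tilde y_i$ implies $\tilde y_i(\tilde\rho_i) \le m\,\tilde\rho_i + c$, where $m$ is the slope and $c$ the intercept of the chord through the endpoints $(\tilde\rho_i^{\min},\tilde y_i^{\min})$ and $(\tilde\rho_i^{\max},\tilde y_i^{\max})$; these are exactly the $m$, $c$, $\tilde y_i^{\min}$, $\tilde y_i^{\max}$ defined in the statement. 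Hence $\bar\nu_l > m\tilde\rho_i + c$ implies $\bar\nu_l > \tilde y_i(\tilde\rho_i) = -p_i\rho_i/\bar p_l$, recovering \eqref{Eq:Lm:CodesignConditions7}, which gives \eqref{Eq:Col:CodesignConditions6}.

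Finally I would assemble the implication chain: feasibility of \eqref{Eq:Co:CodesignConditions} (with the fixed $p_i$, $\bar p_l$) implies, via $\rho_i = 1/\tilde\rho_i$ and the secant bound, feasibility of \eqref{Eq:Lm:CodesignConditions}; and by Lemma \ref{Lm:CodesignConditions}, feasibility of \eqref{Eq:Lm:CodesignConditions} is necessary for the LMI conditions \eqref{Eq:Th:CentralizedTopologyDesign0} in Theorem \ref{Th:CentralizedTopologyDesign} to hold. Chaining these gives the claimed necessity. I expect the main subtlety — and the point most needing care in the write-up — to be the direction of the convexification: because the secant line \emph{over}-estimates the convex function, the linearized constraint \eqref{Eq:Col:CodesignConditions6} is \emph{stronger} than \eqref{Eq:Lm:CodesignConditions7}, so \eqref{Eq:Co:CodesignConditions}$\,\Rightarrow\,$\eqref{Eq:Lm:CodesignConditions} holds in the right direction to preserve necessity, but one must also confirm that the interval $[\tilde\rho_i^{\min},\tilde\rho_i^{\max}]$ on which the chord is taken is consistent with the other constraints \eqref{Eq:Col:CodesignConditions2}--\eqref{Eq:Col:CodesignConditions3} (so that restricting to it loses nothing essential), and that $\tilde\rho_i^{\max}>0$ so the chord is well-defined. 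A brief remark noting that $\epsilon>0$ is chosen small enough that the interval is nonempty completes the argument.
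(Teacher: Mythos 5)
Your reduction of the corollary to Lemma \ref{Lm:CodesignConditions} via the substitution $\tilde\rho_i=1/\rho_i$, with $p_i,\bar p_l$ frozen as prespecified parameters, is indeed the paper's route, and your treatment of \eqref{Eq:Col:CodesignConditions0}--\eqref{Eq:Col:CodesignConditions5} matches the paper's proof. However, there is a genuine gap in the one step that actually requires care: the linearization of \eqref{Eq:Lm:CodesignConditions7}. The function $\tilde y_i(\tilde\rho_i)=-p_i/(\bar p_l\tilde\rho_i)$ is \emph{concave} and \emph{increasing} on $\tilde\rho_i>0$ (its derivatives are $p_i/(\bar p_l\tilde\rho_i^2)>0$ and $-2p_i/(\bar p_l\tilde\rho_i^3)<0$), not convex and decreasing as you assert. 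Hence the secant line $m\tilde\rho_i+c$ through the endpoints of $[\tilde\rho_i^{\min},\tilde\rho_i^{\max}]$ lies \emph{below} the curve, $\tilde y_i(\tilde\rho_i)\ge m\tilde\rho_i+c$, and the correct implication is $\bar\nu_l>-p_i/(\bar p_l\tilde\rho_i)\ \Rightarrow\ \bar\nu_l>m\tilde\rho_i+c$: constraint \eqref{Eq:Col:CodesignConditions6} is a \emph{relaxation} of \eqref{Eq:Lm:CodesignConditions7}, not a strengthening of it.

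This matters because your final assembly is then logically invalid. Writing $A$ for feasibility of \eqref{Eq:Th:CentralizedTopologyDesign0}, $B$ for \eqref{Eq:Lm:CodesignConditions} and $C$ for \eqref{Eq:Co:CodesignConditions}, the corollary claims $A\Rightarrow C$; given the lemma's $A\Rightarrow B$, this requires $B\Rightarrow C$, i.e., each constraint of $C$ must be \emph{implied by} the corresponding constraint of $B$. You instead establish $C\Rightarrow B$ and try to chain it with $A\Rightarrow B$, from which nothing about $A\Rightarrow C$ follows; a strictly stronger $C$ could fail while $A$ holds, so necessity is not proved. The repair is exactly the concavity observation above (which is what the paper uses): the chord is a lower bound, so \eqref{Eq:Lm:CodesignConditions7} implies \eqref{Eq:Col:CodesignConditions6} and the whole chain runs $A\Rightarrow B\Rightarrow C$ as required. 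For every other constraint your transformations are equivalences, so their direction is immaterial and that part of your argument survives unchanged.
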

\begin{proof}
First, introducing a change of variables: $\rho_i\triangleq\frac{1}{\tilde{\rho}_i}$ for the problem \eqref{Eq:Lm:CodesignConditions}, its inequalities \eqref{Eq:Lm:CodesignConditions3},\eqref{Eq:Lm:CodesignConditions4}, and \eqref{Eq:Lm:CodesignConditions6} can be transformed into inequalities \eqref{Eq:Col:CodesignConditions2}, \eqref{Eq:Col:CodesignConditions3}, and \eqref{Eq:Col:CodesignConditions5}, respectively. On the other hand, the inequalities \eqref{Eq:Col:CodesignConditions0}, \eqref{Eq:Col:CodesignConditions1} and \eqref{Eq:Col:CodesignConditions4} are the same as those in \eqref{Eq:Lm:CodesignConditions1}, \eqref{Eq:Lm:CodesignConditions2} and \eqref{Eq:Lm:CodesignConditions5}, respectively. 
Note also that \eqref{Eq:Lm:CodesignConditions0} and \eqref{Eq:Lm:CodesignConditions00} are now satisfied in \eqref{Eq:Co:CodesignConditions} through the pre-definition of variables $p_i,\forall i\in\N_N$ and $\bar{p}_l,\forall l\in\N_L$.

Finally, note that the remaining inequality \eqref{Eq:Lm:CodesignConditions7}, through the aforementioned change of variables, becomes
\begin{equation}
     -\frac{p_i}{\Bar{p}_l\tilde{\rho_i}}<\Bar{\nu}_l< 0,
\end{equation}
which is non-linear in $\tilde{\rho}_i$. As we will show next, for this non-linear inequality, a necessary linear constraint can be obtained. For this purpose, we use the facts that $\tilde{\rho}_i \in [\tilde{\rho}_i^{\min},\tilde{\rho}_i^{\max}]$ and the concavity of the function $\tilde{y}_i \triangleq -\frac{p_i}{\Bar{p}_l\tilde{\rho}_i}$ with respect to $\tilde{\rho}_i$. 
Consequently, a linear lower bound for $\tilde{y}_i$ can be obtained as:
\begin{equation}
     -\frac{p_i}{\bar{p}_l\tilde{\rho}_i} >  m\tilde{\rho}_i+c,
\end{equation}
with $m$ and $c$ values as given in the lemma statement. Note that, \eqref{Eq:Lm:CodesignConditions7} $\implies$ \eqref{Eq:Col:CodesignConditions6}, as 
\begin{equation}\label{Eq:LinearizedConstraint}
    \bar{\nu}_l > -\frac{p_i}{\bar{p}_l\tilde{\rho}_i} \implies  \bar{\nu}_l > m\tilde{\rho}_i+c. 
\end{equation}

\end{proof}

Note that the scalar non-linear constraints in \eqref{Eq:Lm:CodesignConditions} can be enforced by the formulated scalar linear constraints in \eqref{Eq:Co:CodesignConditions} (in variables $\{(\nu_i,\tilde{\rho}_i,\tilde{\gamma}_i):i\in\N_N\},\{(\Bar{\nu}_l,\bar{\rho}_l):l\in\N_L\}$). 
Note, however, that, \eqref{Eq:Co:CodesignConditions} still is a necessary condition of \eqref{Eq:Lm:CodesignConditions} due to the alternative relaxed linear inequality used for the last non-linear inequality in \eqref{Eq:Lm:CodesignConditions}. To achieve this necessary condition \eqref{Eq:Co:CodesignConditions}, starting with \eqref{Eq:Lm:CodesignConditions} we: (i) employed the change of variables $\tilde{\rho}_i \triangleq \frac{1}{\rho_i}, \forall i\in\N_N$, (ii) relaxed the non-linear bound on $\bar{\nu}_l$ with a linear one based on the limits of $\tilde{\rho}_i$, and (iii) recognized the parameters $p_i, \forall i\in\N_N$ and $\Bar{p}_l, \forall l\in\N_L$ as predefined positive scalar parameters.

In conclusion, using the LMI constraints in \eqref{Eq:Th:CentralizedTopologyDesign0}, we derived the non-linear scalar inequality conditions in \eqref{Eq:Lm:CodesignConditions} as a set of necessary conditions for \eqref{Eq:Th:CentralizedTopologyDesign0}. 
Subsequently, using \eqref{Eq:Lm:CodesignConditions}, we derived the linear scalar inequality constraints in \eqref{Eq:Co:CodesignConditions} as a set of necessary conditions for \eqref{Eq:Lm:CodesignConditions}. Thus, we can conclude that the LMI constraints in  \eqref{Eq:Th:CentralizedTopologyDesign0} imply the linear scalar inequality constraints in \eqref{Eq:Co:CodesignConditions}. In other words, the feasibility of \eqref{Eq:Co:CodesignConditions} is a necessary condition for the feasibility of \eqref{Eq:Th:CentralizedTopologyDesign0}.

\subsection{Local Controller Synthesis}

To enforce the identified necessary LMI constraints in Co. \ref{Co:CodesignConditions} on respective subsystem passivity indices, we next formulate the local controller synthesis problem as an LMI problem. This formulation is summarized in the following theorem.

\begin{theorem}\label{Th:LocalControllerDesign}
Under the predefined: (i) scalar parameters $\{p_i: i\in\N_N\}, \{\bar{p}_l:l\in\N_L\}$, (ii) subsystem parameters $\{C_{ti}: i\in\N_N\}$ in \eqref{DGEQ} and $\{(R_l,L_l):l\in\N_L\}$ in \eqref{line}, and (iii) subsystem matrices $A_i, B_i$ in \eqref{Eq:DG_Matrix_definition}, 
to satisfy the necessary conditions of \eqref{Eq:Th:CentralizedTopologyDesign0} identified in Co. \ref{Co:CodesignConditions}, the local controller gains $\{K_{i0}, i\in\N_N\}$ (\ref{Controller}) and passivity indices should be determined by solving the LMI problem:
\begin{subequations}
\begin{align}\nonumber
&\mbox{Find: }\ \{(\tilde{K}_{i0}, P_i, \nu_i, \tilde{\rho}_i, \tilde{\gamma}_i):i\in\mathbb{N}_N\}, \{(\bar{P}_l, \bar{\nu}_l,\bar{\rho}_l):l\in\mathbb{N}_L\} \\
&\mbox{Sub. to: }\ P_i > 0,  \forall i\in\mathbb{N}_N, \ \bar{P}_l > 0, \forall l\in\mathbb{N}_L \\
&\ 
\bm{\tilde{\rho}_i\I & P_i & \0 \\
P_i &-\mathcal{H}(A_iP_i + B_i\Tilde{K}_{i0})& -\I + \frac{1}{2}P_i\\
\0 & -\I + \frac{1}{2}P_i & -\nu_i\I} > 0, \forall i\in\mathbb{N}_N \label{GetPassivityIndicesDGs}\\
&\ 
\bm{
    \frac{2\bar{P}_lR_l}{L_l}-\bar{\rho}_l & -\frac{\bar{P}_l}{L_l}+\frac{1}{2}\\
        \star & -\bar{\nu}_l
    }\geq0, \forall l\in\mathbb{N}_L \label{GetPassivityIndicesLines}\\
&\ -\frac{\tilde{\gamma}_i}{p_i}<\nu_i<0, \forall i\in\N_N \label{Eq:Th:LocalControllerDesign1}\\
&\ 0<\tilde{\rho_i}< \min\left\{p_i,\frac{4\tilde{\gamma}_i}{p_i}\right\}, \forall i\in\N_N \label{Eq:Th:LocalControllerDesign2}\\
&\Bar{\rho}_l > \max\left\{-\frac{p_i\nu_i}{\Bar{p}_l C_{ti}^2},\frac{\tilde{\rho}_i}{p_i\Bar{p}_l}(\frac{p_i}{2C_{ti}}-\frac{\Bar{p}_l}{2})^2 \right\}>0, \forall l\in\mathcal{E}_i, \forall i\in\N_N \label{Eq:Th:LocalControllerDesign3}\\
&\bar{\nu}_l > m\tilde{\rho}_i + c, \forall l\in\mathcal{E}_i,\forall i\in\N_N \label{Eq:Th:LocalControllerDesign4}
\end{align}
\end{subequations}
where $K_{i0} \triangleq \tilde{K}_{i0}P_i^{-1}$, $\rho_i\triangleq\frac{1}{\tilde{\rho}_i}$, and the parameters $m$ and $c$ are the same as those defined in Co. \ref{Co:CodesignConditions}.
\end{theorem}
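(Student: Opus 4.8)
The plan is to recognize that the stated LMI problem is the conjunction of three already-established pieces, so the proof reduces to matching each constraint to its source and verifying that the shared variables are used consistently. First, for the DG part, I would instantiate Co.~\ref{Col.LTI_LocalController_XEID} on the closed-loop DG dynamics \eqref{closedloopdynamic} with $A=A_i$, $B=B_i$, $L=K_{i0}$, input $\eta=\tilde u_i$, state $x_i$, and target supply-rate matrix $X=X_i$ from \eqref{Eq:XEID_DG}. The corollary's hypothesis $X^{22}<0$ holds because $X_i^{22}=-\rho_i\I$ with $\rho_i=1/\tilde\rho_i>0$, which is forced by \eqref{Eq:Th:LocalControllerDesign2}. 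Substituting $-(X_i^{22})^{-1}=\tilde\rho_i\I$, $X_i^{21}=\tfrac12\I$, $X_i^{11}=-\nu_i\I$, writing $P=P_i$, and renaming the auxiliary gain variable as $\tilde K_{i0}$, the corollary's LMI collapses exactly to \eqref{GetPassivityIndicesDGs} (modulo the harmless tightening of ``$\ge0$'' to ``$>0$''), while its recovery rule $L=KP^{-1}$ becomes $K_{i0}=\tilde K_{i0}P_i^{-1}$. Hence feasibility of \eqref{GetPassivityIndicesDGs} with $P_i>0$ certifies that $\Sigma_i^{DG}$, under $K_{i0}=\tilde K_{i0}P_i^{-1}$, is IF-OFP$(\nu_i,\rho_i)$, i.e., $X_i$-EID as assumed in \eqref{Eq:XEID_DG}.

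Second, constraint \eqref{GetPassivityIndicesLines} together with $\bar P_l>0$ is verbatim the line feasibility problem \eqref{Eq:Lm:LineDissipativityStep1} of Lm.~\ref{Lm:LineDissipativityStep}, so any solution certifies that $\Sigma_l^{Line}$ is $\bar X_l$-EID with the assumed structure \eqref{Eq:XEID_Line}. Third, I would line up the scalar constraints \eqref{Eq:Th:LocalControllerDesign1}--\eqref{Eq:Th:LocalControllerDesign4} with \eqref{Eq:Co:CodesignConditions}: \eqref{Eq:Th:LocalControllerDesign1} is \eqref{Eq:Col:CodesignConditions1}; \eqref{Eq:Th:LocalControllerDesign2} merges \eqref{Eq:Col:CodesignConditions2}--\eqref{Eq:Col:CodesignConditions3} under a single $\min$; \eqref{Eq:Th:LocalControllerDesign3} merges \eqref{Eq:Col:CodesignConditions4}--\eqref{Eq:Col:CodesignConditions5} under a single $\max$ (the trailing ``$>0$'' just recording $\bar\rho_l>0$); and \eqref{Eq:Th:LocalControllerDesign4} is \eqref{Eq:Col:CodesignConditions6}, with the same $m,c$. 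Since $p_i,\bar p_l$ are prespecified positive scalars and $\rho_i\triangleq1/\tilde\rho_i$ throughout, all of the above are linear in the decision set $\{(\tilde K_{i0},P_i,\nu_i,\tilde\rho_i,\tilde\gamma_i):i\in\N_N\}\cup\{(\bar P_l,\bar\nu_l,\bar\rho_l):l\in\N_L\}$, so the stacked problem is a genuine LMI. Combining the three pieces, a solution delivers local controllers $\{K_{i0}\}$ together with indices $\{(\nu_i,\rho_i)\}$ and $\{(\bar\nu_l,\bar\rho_l)\}$ that are realizable passivity certificates and, simultaneously, satisfy the necessary conditions of Co.~\ref{Co:CodesignConditions} for the co-design LMI \eqref{Eq:Th:CentralizedTopologyDesign0} of Th.~\ref{Th:CentralizedTopologyDesign}.

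The only genuinely non-routine point I would stress is that \eqref{Eq:Th:LocalControllerDesign4} must be used exactly as written: it is the linear under-approximation $m\tilde\rho_i+c$ of the concave map $\tilde\rho_i\mapsto -p_i/(\bar p_l\tilde\rho_i)$ over $\tilde\rho_i\in[\tilde\rho_i^{\min},\tilde\rho_i^{\max}]$, so $\bar\nu_l>m\tilde\rho_i+c$ implies $\bar\nu_l>-p_i/(\bar p_l\tilde\rho_i)$, which is \eqref{Eq:Lm:CodesignConditions7} after the change of variables; this relaxation is already justified inside the proof of Co.~\ref{Co:CodesignConditions} (see \eqref{Eq:LinearizedConstraint}), so it may be invoked rather than re-derived. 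A secondary bookkeeping check is that $\tilde\rho_i$ carries a double meaning --- it is $-(X_i^{22})^{-1}$ inside \eqref{GetPassivityIndicesDGs} and the scalar variable in \eqref{Eq:Th:LocalControllerDesign2}--\eqref{Eq:Th:LocalControllerDesign4} --- but both follow from the single identification $\rho_i=1/\tilde\rho_i$, so there is no inconsistency; I would flag this explicitly. Beyond these, I anticipate no real obstacle: the theorem is essentially an assembly of Co.~\ref{Col.LTI_LocalController_XEID}, Lm.~\ref{Lm:LineDissipativityStep}, and Co.~\ref{Co:CodesignConditions}.
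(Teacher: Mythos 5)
Your proposal is correct and follows essentially the same route as the paper's proof: constraint \eqref{GetPassivityIndicesDGs} comes from instantiating Co.~\ref{Col.LTI_LocalController_XEID} on \eqref{closedloopdynamic} with the supply rate \eqref{Eq:XEID_DG}, constraint \eqref{GetPassivityIndicesLines} is Lm.~\ref{Lm:LineDissipativityStep} verbatim, and \eqref{Eq:Th:LocalControllerDesign1}--\eqref{Eq:Th:LocalControllerDesign4} are the conditions of Co.~\ref{Co:CodesignConditions} (with \eqref{Eq:Col:CodesignConditions2}--\eqref{Eq:Col:CodesignConditions3} and \eqref{Eq:Col:CodesignConditions4}--\eqref{Eq:Col:CodesignConditions5} merged via $\min$ and $\max$). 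Your explicit substitutions $-(X_i^{22})^{-1}=\tilde\rho_i\I$, $X_i^{21}=\tfrac12\I$, $X_i^{11}=-\nu_i\I$ and the remark on the consistent double role of $\tilde\rho_i$ are in fact more detailed than the paper's own sketch.
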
 
\begin{proof}
The proof proceeds as follows: (i) We start by considering the dynamic models of $\Sigma_i^{DG}$ and $\Sigma_l^{Line}$ as described in (\ref{closedloopdynamic}) and \eqref{Eq:LineCompact}, respectively. We then apply an LMI-based controller synthesis and analysis technique to enforce and identify the subsystem passivity indices assumed in (\ref{Eq:XEID_DG} and \ref{Eq:XEID_Line}), respectively. (ii) Next, we apply the LMI problems in Co. \ref{Col.LTI_LocalController_XEID} and Lm. \ref{Lm:LineDissipativityStep} to get the passivity indices of $\Sigma_i^{DG}$ and $\Sigma_l^{Line}$, which leads to the constraints \eqref{GetPassivityIndicesDGs} and \eqref{GetPassivityIndicesLines}, respectively. (iii) Finally, we impose the necessary conditions on subsystem passivity indices identified in Co. \ref{Co:CodesignConditions} to support the feasibility and effectiveness of the global control and communication topology co-design approach presented in (\ref{globalcontrollertheorem}). In particular, the constraints \eqref{Eq:Th:LocalControllerDesign1}, \eqref{Eq:Th:LocalControllerDesign2}, \eqref{Eq:Th:LocalControllerDesign3} and \eqref{Eq:Th:LocalControllerDesign4} are derived respectively from constraints \eqref{Eq:Col:CodesignConditions1}, \eqref{Eq:Col:CodesignConditions2} and \eqref{Eq:Col:CodesignConditions3}, \eqref{Eq:Col:CodesignConditions4} and \eqref{Eq:Col:CodesignConditions5}, and \eqref{Eq:Col:CodesignConditions6} in Co. \ref{Co:CodesignConditions}.
\end{proof}


\subsection{Overview}
To conclude this section, we outline the main steps
necessary for designing local controllers, global controllers,
and communication topology in a centralized manner for the
considered DC MG.

First, we need to specify the parameters of the DC MG (i.e., electrical characteristics) and then determine the state and input matrices of the DGs and lines, represented as $A_i, B_i, \forall i\in\mathbb{N}_N$ \eqref{Eq:DG_Matrix_definition} and $\bar{A}_l, \bar{B}_l, \forall l\in\mathbb{N}_L$ \eqref{Eq:Line_Matrix_definition}. Next, we select some scalar parameters for $p_i, \forall i \in \mathbb{N}_N$ and $\bar{p}_l, \forall l \in \mathbb{N}_L$. Following this, we synthesize the local controllers in Th. \ref{Th:LocalControllerDesign} and obtain the passivity indices of the DGs $\{\rho_i, \nu_i : \forall i \in \mathbb{N}_N\}$ \eqref{Eq:XEID_DG} and lines $\{\bar{\rho}_l, \bar{\nu}_l : \forall l \in \mathbb{N}_L\}$ \eqref{Eq:XEID_Line}. Finally, we use these passivity indices to synthesize the global controller and communication topology of the DG MG using Th. \ref{Th:CentralizedTopologyDesign}. This process is illustrated in Fig. \ref{fig.overview}, providing a clear and comprehensive roadmap for implementation. 

\begin{figure}
    \centering
    \includegraphics[width=0.7\columnwidth]{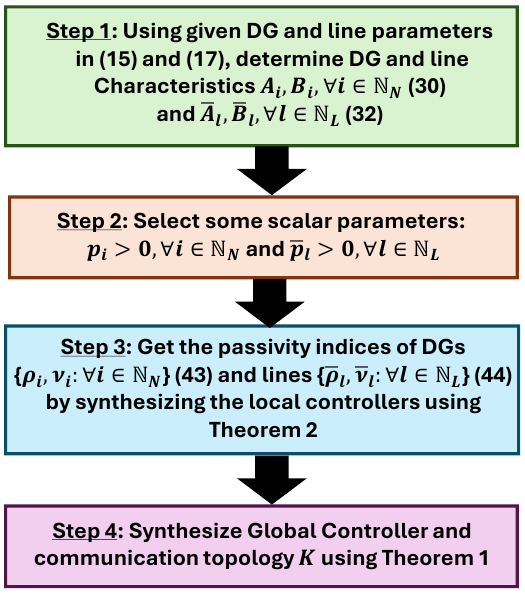}
    \caption{Overview of the procedure for designing local and distributed global controllers for the considered DC MG.}
    \label{fig.overview}
\end{figure}

\section{Simulation Results}\label{Simulation}
To demonstrate the effectiveness of the proposed dissipativity-based control and topology co-design technique, we simulated two islanded DC MG configurations under different scenarios in a MATLAB/Simulink environment. In Case I, we used a DC MG consisting of 4 DGs and 4 lines, while in Case II, we used a DC MG consisting of 6 DGs and 9 lines. In each case, the physical topology of the DC MG was randomly generated using random geometric graph generation technique with connectivity parameter as 0.6 \cite{zhang2018robustness}. Each DG also supports a corresponding local load. The reference voltage amplitude was set to 48 V, and the nominal voltage of voltage sources was set to 120 V. The parameters (in particular, their mean nominal values) of the used DC MG components are listed in Table \ref{table.parameters}. Simply, each DG, load and transmission line parameter was selected by adding a random variation to the correponding mean nominal parameter value given in Table \ref{table.parameters} so as to create challenging heterogeneous DC MG configurations. Fig. \ref{fig.physicaltopology} shows an example physical topology of an islanded DC MG. 

Each considered DC MG configuration,
designed using the proposed dissipativity-based control and topology co-design technique (or a conventional droop control technique), 
was tested under a sequence of load changes. In particular, at $t=3$ s, a constant current load $\bar{I}_L$ was added to the DGs, and an existed constant impedance load $Y_L$ was decreased at $t=4$ s and increased again at $t=7$ s. It is important to note that the constant current load $\bar{I}_L$ and the exogenous reference voltage input $V_r$ were incorporated into the simulation to simulate disturbances, as described in \eqref{statespacewithcontroller}. Also, note that per-unit currents represent normalized currents relative to their capacity, calculated by dividing the output currents by their nominal currents.

Furthermore, for each DC MG case, when the proposed control and topology co-design process is used (given in Th. \ref{Th:CentralizedTopologyDesign}), we considered two different ways to influence the resulting communication topology.
In the first scenario, we include an additional constraint in the co-design process so as to restrict the resulting communication topology to exactly match the physical topology (but with bi-directional links). Therefore, this scenario basically designs the communication topology under a ``hard graph constraint.'' In the second scenario, we omit such hard graph constraints but penalize the use of communication links proportionally to their physical lengths via appropriately selecting the cost coefficients in the communication cost objective used in the co-design process. Therefore, this scenario basically designs the communication topology under ``soft graph constraints.'' In the proposed co-design process (see Step 2 in Fig. \ref{fig.overview}), we selected the design parameters as $p_i=0.01,\forall i$ and $p_l=0.001, \forall l$.   

\begin{figure}
    \centering
    \includegraphics[width=0.9\columnwidth]{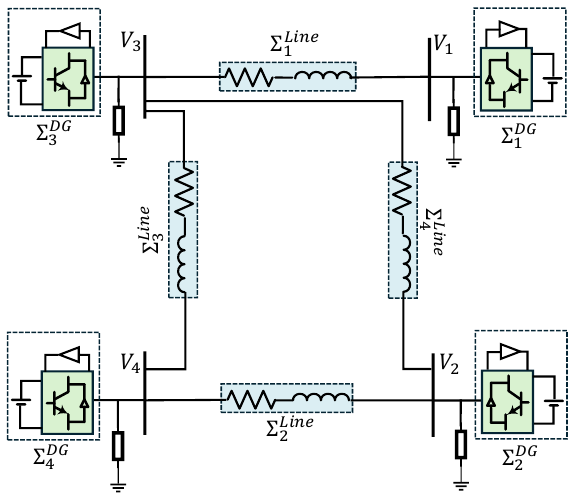}
    \caption{The physical topology of DC MG with 4 DGs and 4 Lines.}
    \label{fig.physicaltopology}
\end{figure}

\begin{table}
\caption{Parameters of the simulated islanded DC MGs\label{table.parameters}}
\centering
\renewcommand{\arraystretch}{1}

\makeatletter
\def\thickhline{%
  \noalign{\ifnum0=`}\fi\hrule \@height \thickarrayrulewidth \futurelet
   \reserved@a\@xthickhline}
\def\@xthickhline{\ifx\reserved@a\thickhline
               \vskip\doublerulesep
               \vskip-\thickarrayrulewidth
             \fi
      \ifnum0=`{\fi}}
\makeatother

\newlength{\thickarrayrulewidth}
\setlength{\thickarrayrulewidth}{2.5\arrayrulewidth}

\begin{tabular}{c | c}
  \thickhline
\textbf{DC MG Parameters} & \textbf{Mean Nominal Values} \\
   \hline \hline
    Internal Resistance &  $R_t$ = 50 $m\Omega$ \\
\hline
 Internal Inductance &   $L_t$ = 10 mH \\
 \hline
 Filter Capacitance &   $C_t$ = 0.5 F \\
 \hline
 Constant Impedance Load &   $Y_L$ = 2 $\Omega$\\
 \hline
 Constant Current Load &   $\bar{I}_{L}$ = 0.5 A\\
 \hline
 Line Resistance &   $R_l$ = 50 $\Omega$\\
 \hline
 Line Inductance &   $L_l$ = 10 mH \\
 \hline
 Reference Voltage & $V_{r}$ = 48 V \\
 \hline
 Voltage Source & 120 V\\
 \thickhline
\textbf{Co-design Parameters} & \textbf{Selected Values}
 \\ \hline \hline
 
$p$ & 0.01 \\
\hline
$\bar{p}$ & 0.001\\
\hline
$L_2$-gain bound $\bar{\gamma}$ & 1000\\
\thickhline
\textbf{Passivity Indices} & \textbf{Results}
 \\ \hline \hline
DG Output Passivity $\rho$ & 5049\\
\hline
DG Input Passivity $\nu$ & -26000 \\
\hline
Line Output Passivity $\bar{\rho}$ & $2.3\times 10^9$ \\
\hline
Line Input Passivity $\bar{\nu}$ & $-9.8\times 10^5$\\
\thickhline
\end{tabular}
\end{table}

\subsection{Case I: DC MG with 4 DGs and 4 Lines}\label{sec.proposedcontroller}
In this section, we demonstrate the effectiveness of the proposed dissipativity-based control and topology co-design technique when applied to a DC MG with 4 DGs and 4 lines. Figures \ref{fig.communicationtopology}(a) and  \ref{fig.communicationtopology}(b) show the obtained communication topologies respectively under hard and soft graph constraints. Note that, under soft (as opposed to hard) graph constraints, the resulting total number of communication links from the proposed co-design process has decreased by $25\%$. Figure \ref{fig.voltagecurrent} shows the observations from the DC MG co-designed under soft graph constraints. It is worth noting that, identical observations to those in Fig. \ref{fig.voltagecurrent} were observed from the DC MG co-designed under hard graph constraints. These observations imply that DC MG communication topologies can be optimized without compromising performance.

\begin{figure}
    \centering
    \includegraphics[width=0.9\columnwidth]{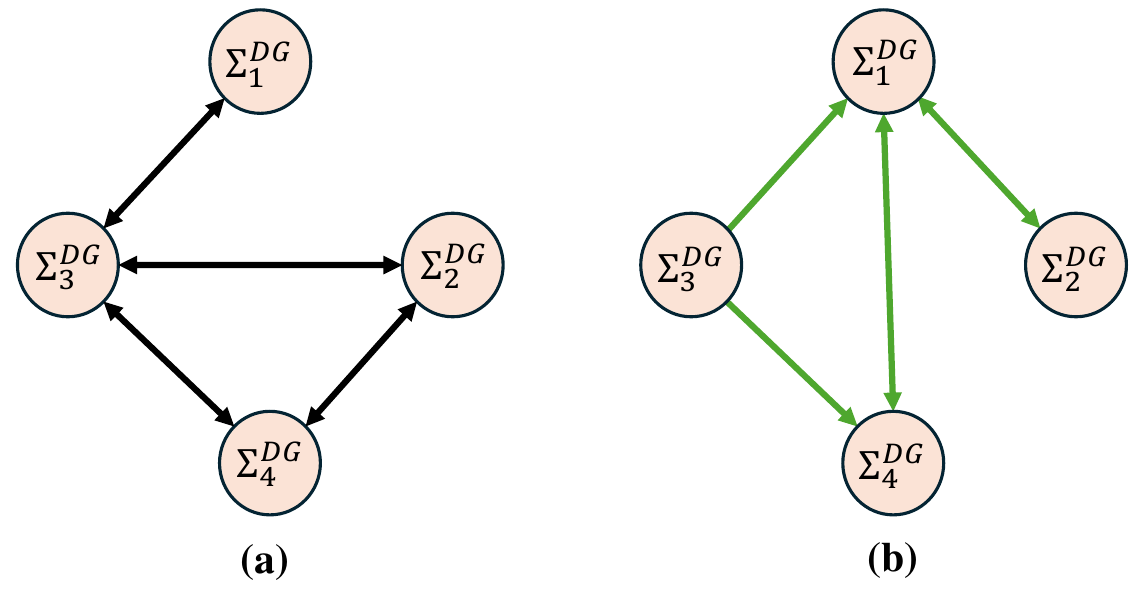}
    \caption{The communication topology for DC MG shown in Fig. \ref{fig.physicaltopology} under (a) hard and (b) soft graph constraints.}
    \label{fig.communicationtopology}
\end{figure}

Figure \ref{fig.voltagecurrent} shows the voltages and per-unit currents observed for each DG. As seen in Fig. \ref{fig.voltagecurrent}(a), the voltages smoothly track the reference value of 48 V. Figure \ref{fig.voltagecurrent}(b) displays the per-unit currents, where each DG tracks a value of 1, indicating that all DGs are operating at their rated capacity, ensuring effective load sharing. During the load changes, it can be seen that the proposed controller successfully restores voltages to the reference value $V_r$ and maintains balanced current sharing among the DGs.

\begin{figure}
    \centering
    \includegraphics[width=1\columnwidth]{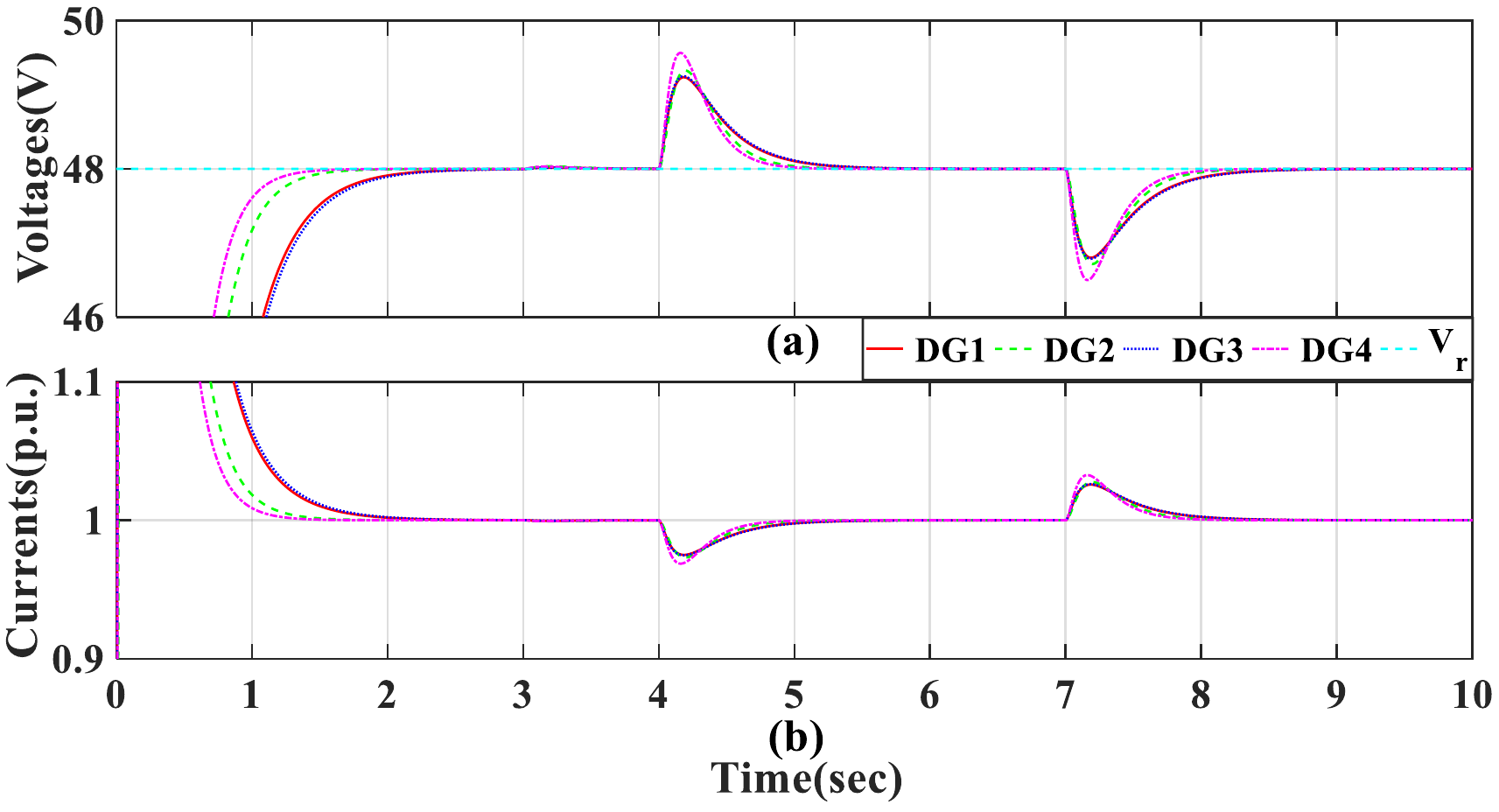}
    \caption{The DG output (a) voltages and (b) per-unit currents using the proposed dissipativity-based distributed controllers in DC MG shown in Fig. \ref{fig.physicaltopology}.}
    \label{fig.voltagecurrent}
\end{figure}




As demonstrated, the proposed co-design approach eliminates the need for manual design of the communication topology, allowing for optimized use of communication links without compromising the voltage regulation performance and robustness. In this case study, we observed that the soft graph constraint scenario resulted in requiring fewer communication links than in the hard graph constraints scenario, thus reducing communication bandwidth. It is worth mentioning that this reduction is achieved while ensuring key performance metrics, such as voltage regulation and load sharing, remain unaffected by topology change.

\subsection{Case II: DC MG with 6 DGs and 9 Lines}
To further demonstrate the effectiveness of the proposed dissipativity-based controller, we extended the simulation to a larger and more complex DC MG configuration that includes 6 DGs and 9 lines. The electrical and communication topologies of the DC MG are shown in Fig. \ref{fig.scalability} and Fig. \ref{fig.scalablity_communication}, respectively.

\begin{figure}
    \centering
    \includegraphics[width=0.9\columnwidth]{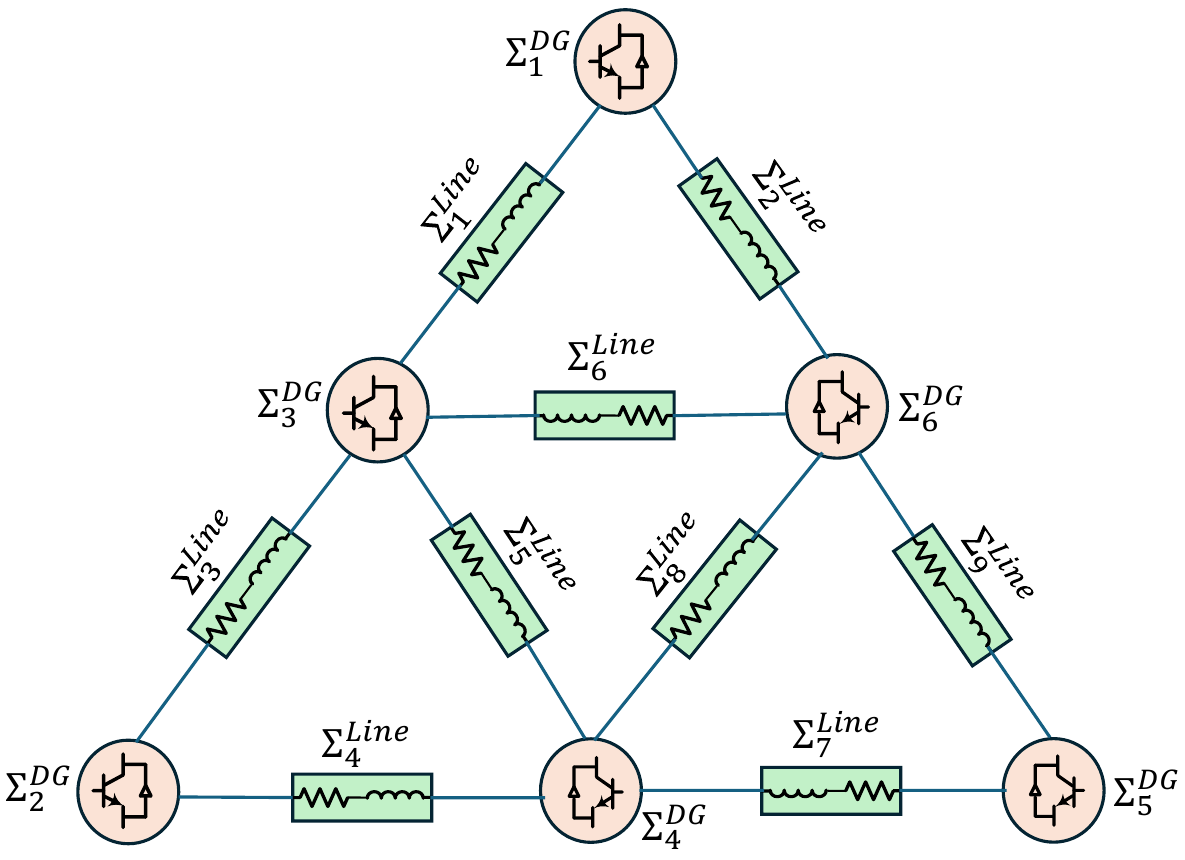}
    \caption{The physical topology of DC MG with 6 DGs and 9 transmission lines.}
    \label{fig.scalability}
\end{figure}

As shown in Fig. \ref{fig.6DGs}, using the communication topology obtained under soft graph constraints, the controllers successfully restores the output voltages to the reference value of 48 V while ensuring proper load sharing among the 6 DGs. Like Case I, we compared the co-design solutions obtained under hard and soft graph constraints. As shown in Fig. \ref{fig.scalablity_communication}, under the soft graph constraints, the resulting communication topology involves $39\%$ decreased number of communication links, while still achieving identical performance in voltage regulation and load sharing.

\begin{figure}
    \centering
    \includegraphics[width=1\columnwidth]{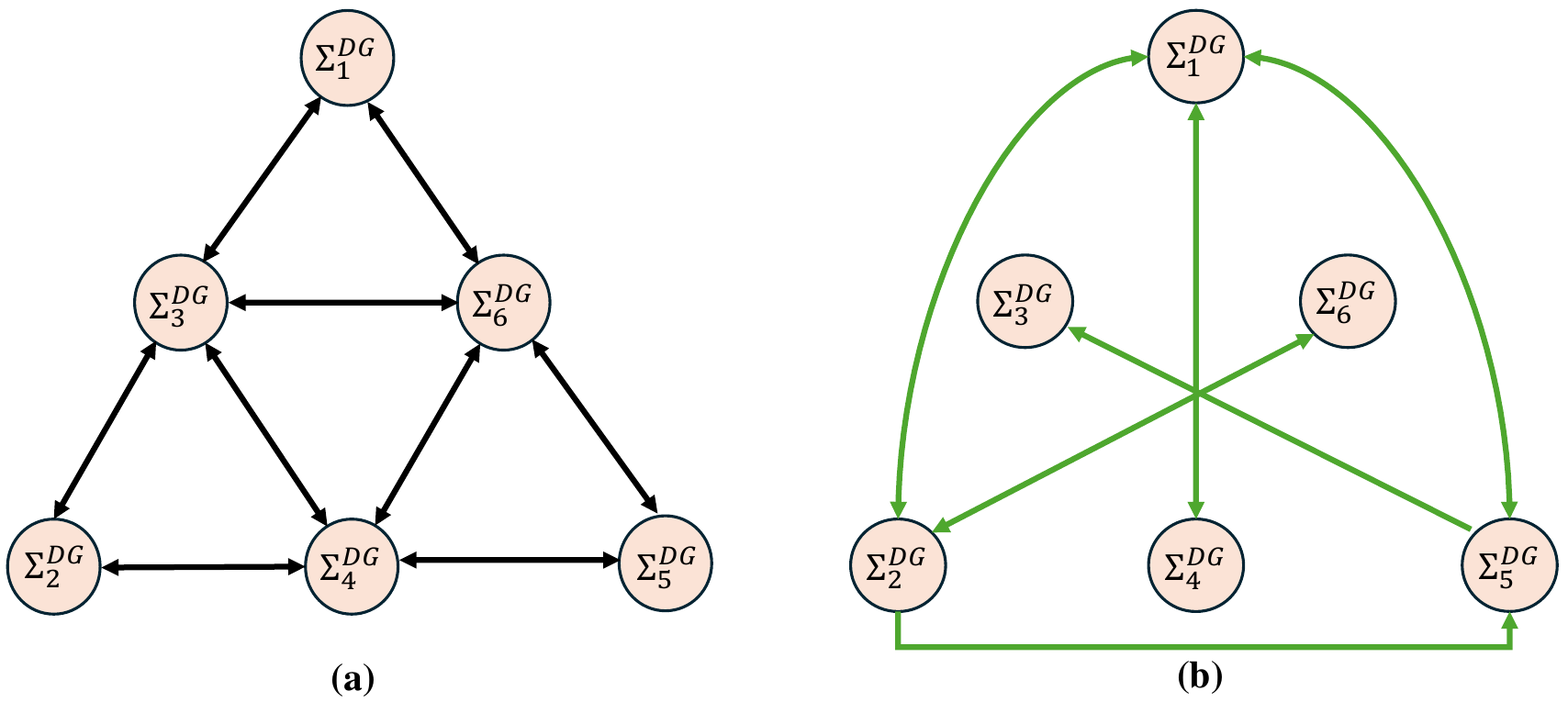}
    \caption{The communication topology for DC MG shown in Fig. \ref{fig.scalability} under (a) hard and (b) soft graph constraints.}
    \label{fig.scalablity_communication}
\end{figure}

\begin{figure}
    \centering
    \includegraphics[width=1\columnwidth]{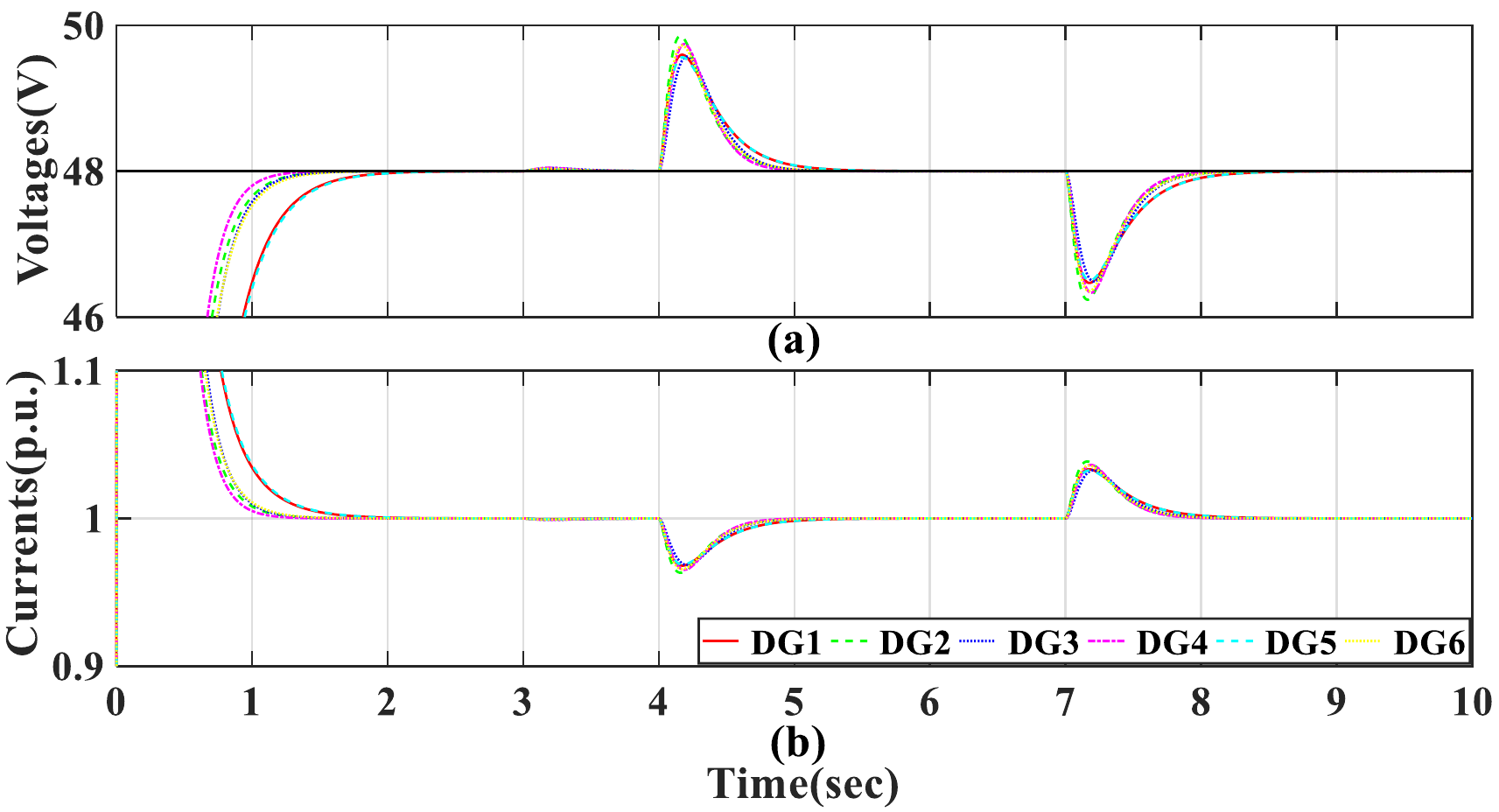}
    \caption{The DG output (a) voltages and (b) per-unit currents using the proposed dissipativity-based distributed controllers in DC MG shown in Fig. \ref{fig.scalability}.}
    \label{fig.6DGs}
\end{figure}

\subsection{Comparison with Traditional Droop Controller}
Finally, we compared the performance of the proposed dissipativity-based controller with a conventional droop controller in terms of voltage regulation performance in the DC MG configuration considerd in Case I. To ensure a fair comparison, our dissipativity-based co-design process used hard graph constraints (leading to a communication topology identical to the physical topology). Additionally, all DGs and line parameters were kept consistent across both methods, and the droop controller gains were tuned to optimize the resulting performance, starting with the droop control gains sourced from the well-established work  \cite{guo2018distributed}.

As demonstrated in Fig. \ref{fig.droop_control}, the proposed dissipativity-based controller maintained precise voltage tracking at 48 V throughout the simulation, with minimal overshoot during load changes. In contrast, the droop controller experienced more significant voltage overshoots at times of load changes, highlighting its less effective response to load variations. Furthermore, the droop control method initially exhibits a voltage drop, which is only compensated by activating a secondary control \cite{Najafirad2} at $t=1$ s to restore the voltages to the reference value gradually.

This comparison underscores the limitations of the droop control approach, which requires additional control layers, i.e.,  secondary or distributed control, to compensate for both voltage deviations and overshoot. The proposed dissipativity-based controller, however, achieves stable and robust voltage regulation without the need for such extra control layers, resulting in smoother voltage restoration and improved performance during load changes. These results demonstrate the superiority of the proposed controller over conventional droop controllers, making it a more effective solution for DC MG applications.

\begin{figure}
    \centering
    \includegraphics[width=0.9\columnwidth]{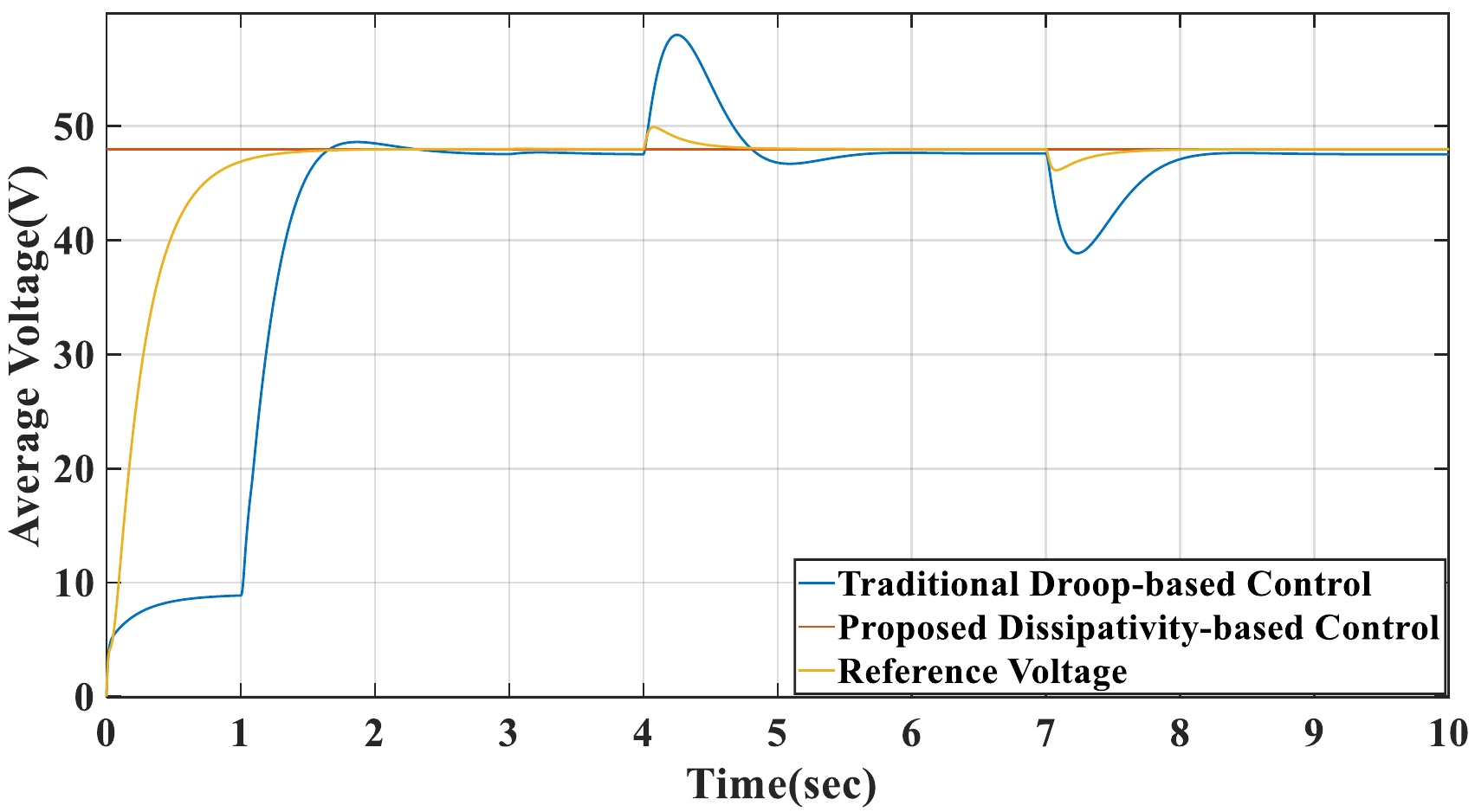}
    \caption{Comparison of average voltage regulation between the proposed dissipativity-based controller and droop controllers for DC MG.}
    \label{fig.droop_control}
\end{figure}

\section{Conclusion}\label{Conclusion}
This paper proposes a dissipativity-based distributed droop-free control and communication topology co-design approach for DC MGs, where DGs and transmission lines are treated as two interconnected sets of subsystems through an interconnection matrix. By leveraging dissipativity theory, we develop a unified framework that simultaneously addresses the distributed global controller design and the communication topology design problems. To support the feasibility of this global co-design process, we use specifically designed local controllers at the subsystems. 
To enable efficient and scalable evaluations, we formulate all design problems as LMI-based convex optimization problems. This integrated approach ensures robust voltage regulation and current sharing with respect to various disturbances using a distributed droop-free controller over an optimized communication topology. Future work will focus on developing the co-design approach in a decentralized and compositional manner, enabling plug-and-play and generalizing this co-design approach to AC MGs.

\bibliographystyle{IEEEtran}
\bibliography{References}

\end{document}